\newtheorem{theorem}{Theorem}[section]
\newtheorem{corollary}[theorem]{Corollary}
\newtheorem{lemma}[theorem]{Lemma}
\theoremstyle{definition}
\newtheorem*{comment*}{Comment}
\newcommand{\SW}{\text{SW}}
\begin{document}

\allowdisplaybreaks

\title{\bf Revisiting the Distortion of Distributed Voting}

\author[1]{Aris Filos-Ratsikas}
\author[2]{Alexandros A. Voudouris}

\affil[1]{School of Informatics, University of Edinburgh, UK}
\affil[2]{School of Computer Science and Electronic Engineering, University of Essex, UK}

\renewcommand\Authands{ and }
\date{}

\maketitle

\begin{abstract}
We consider a setting with agents that have preferences over alternatives and are partitioned into disjoint districts. The goal is to choose one alternative as the winner using a mechanism which first decides a representative alternative for each district based on a local election with the agents therein as participants, and then chooses one of the district representatives as the winner. Previous work showed bounds on the distortion of a specific class of deterministic plurality-based mechanisms depending on the available information about the preferences of the agents in the districts. In this paper, we first consider the whole class of deterministic mechanisms and show asymptotically tight bounds on their distortion. We then initiate the study of the distortion of randomized mechanisms in distributed voting and show bounds based on several informational assumptions, which in many cases turn out to be tight. Finally, we also experimentally compare the distortion of many different mechanisms of interest using synthetic and real-world data. 
\end{abstract}

\section{Introduction}
{\em Voting} is a ubiquitous method for making decisions with a large number of applications, such as electing political representatives, deciding how to split a public budget between projects, or choosing which services (restaurants, hotels, etc) to recommend to new users based on past user experiences. As such, it has been at the epicenter of research within multiple disciplines including political sciences, economics and computer science~\citep{comsoc-book}. The most prominent question in this research agenda is to identify the best voting rule to use to collectively aggregate the preferences of agents over alternative options into a single winning alternative, with most of the earlier literature focusing on axiomatic properties that good voting rules should have. An alternative way to tackle this question  that has been proposed in computer science is through the {\em distortion} framework~\citep{survey2021} which allows to compare different voting rules based on how well they approximate the optimal choice as measured in terms of a social objective function like the {\em utilitarian social welfare}.  

Since its inception in 2006 by \citet{procaccia2006distortion}, the distortion framework has been applied to several utilitarian social choice settings~(e.g., \citep{boutilier2015optimal,anshelevich2018approximating,gkatzelis2020resolving}). 
The lion's share of previous work has focused on centralized models with a single pool of agents whose preferences are directly given as input to a voting rule, which thus can utilize all the given information at once to make a decision. However, there are many applications with multiple pools of agents which make independent local decisions that can be thought of as recommendations for the final decision. To give a concrete example, in most political election systems, the citizens are partitioned into districts based on geographic or other criteria, and vote within their districts to propose the candidate (party) they would like to be selected as the winner. 

Inspired by situations like the one described above, \citet{distributed2020} initiated the study of the distortion of mechanisms in a {\em distributed} single-winner setting where a set of $n$ agents with \emph{cardinal preferences} over a set of $m$ alternatives are partitioned into $k$ disjoint districts. The authors focused on \emph{deterministic} mechanisms of the form {\sc Plurality-of-$f$},  which first choose a representative alternative for each district according to some rule $f$, by holding a local election with the agents of the district as the voters, and then picking the winner to be the alternative that is representative of the most districts (i.e., using the {\sc Plurality} rule). \citeauthor{distributed2020} considered mechanisms for which the rule $f$ can be \emph{cardinal} or \emph{ordinal}, i.e., it may use the actual numerical information about the preferences of the agents within the districts or just consistent rankings. The authors showed that, when the districts are symmetric (that is, each of them contains the same number of agents), the distortion of a cardinal mechanism, namely {\sc Plurality-of-Range-Voting} is $O(km)$, and provided an asymptotically matching lower bound of $\Omega(km)$ on the distortion of any {\sc Plurality-of-$f$} mechanism. For ordinal mechanisms, they showed that {\sc Plurality-of-Plurality} achieves a distortion of $O(km^2)$, and that this is asymptotically best among all ordinal {\sc Plurality-of-$f$} mechanisms.

\subsection{Revisiting the distortion of distributed voting}
A first observation about the results of \citet{distributed2020} is that there is a-priori no reason to restrict our attention to only mechanisms in the class {\sc Plurality-of-$f$}, as using other over-districts rules could potentially lead to better distortion. Indeed, follow-up work considered distributed social choice settings with metric preferences~\citep{anshelevich2022distortion,filos2021approximate} without such restrictions on the over-districts rule. In addition, all of the previous work on these settings only considered deterministic mechanisms that use deterministic in-district and over-districts rules. Randomization has proven out to be a very useful tool in achieving better (expected) distortion bounds in the centralized setting (see \cite{boutilier2015optimal, ebadian2022optimized}), so it is only natural to consider randomized mechanisms in the distributed setting as well. Finally, an important question is how the distortion bounds are affected in case the participants act selfishly, and whether there are \emph{strategyproof} mechanisms with good distortion bounds. This question has been considered in the centralized setting \citep{filos2014truthful,bhaskar2018cardinal,bhaskar2018truthful-multiwinner,ebadian2022optimized} and also in the distributed metric setting~\citep{filos2021approximate}; we consider it in the context of the normalized setting of \citet{distributed2020} as well. 

\subsection{Our Contributions}
We consider the class of \emph{all} mechanisms for distributed voting in the setting of \citep{distributed2020}. In particular, we consider the $f_{\text{over}}$-of-$f_{\text{in}}$ class of mechanisms, where $f_{\text{in}}$ is an in-district rule that takes as input the preferences of the agents within each district and outputs a representative alternative for the district, while $f_{\text{over}}$ is a rule that takes as input the representative alternatives of all districts and chooses one of them as the overall winner. We consider several different cases depending on the nature of $f_{\text{over}}$ and $f_{\text{in}}$ (deterministic or randomized), and the type of information they can utilize (cardinal or ordinal). We show the following results; see Table~\ref{tab:results} for an overview.

\begin{table}[t]
\centering
\begin{tabular}{|l|c|c|c|}
\hline
              & Deterministic          & Randomized-of-Deterministic   & Randomized-of-Randomized                            \\ \hline
Ordinal       & $\Theta(km^2)$ & $\Theta(km^2)$ & $\Omega(\sqrt{m}), O(\sqrt{m \log m})$ \\
Cardinal      & $\Theta(km)$   & $\Theta(k)$    & $\Theta(k)$                              \\ 
Strategyproof & $\Theta(nm)$   & $\Theta(nm)$   & $\Omega(\sqrt{m}), O(\sqrt{m \log m})$ \\ \hline
\end{tabular}
\caption{An overview of our results. Specific details can be found in the appropriate sections.}
\label{tab:results}
\end{table}

\smallskip
\noindent 
{\bf Deterministic Mechanisms.} When $f_{\text{over}}$ and $f_{\text{in}}$ are both deterministic and the districts are symmetric, we show that the best possible distortion is $\Theta(km)$ when the valuation functions of the agents are accessible (cardinal mechanisms), and is $\Theta(km^2)$ when only ordinal information about the preferences of the agents is available (ordinal mechanisms). The upper bounds were shown by \cite{distributed2020} and here we provide asymptotically tight lower bounds. These results show that for general, unstructured (normalized) valuations, employing different over-district rules in fact does \emph{not} result in improvements on the distortion. We present these results in Section~\ref{sec:deterministic}.
    
\smallskip
\noindent 
{\bf Randomized Mechanisms.} 
In Section~\ref{sec:randomized}, we consider for the first time the distortion of randomized mechanisms in distributed voting. 
We first prove a simple composition theorem, which shows that using an in-district rule with known distortion $\delta$ in the centralized setting and then selecting the winner uniformly at random from the set of representatives, defines a distributed mechanism with distortion $O(k \delta)$. Using this, complemented with new lower bounds, we show that the best possible distortion for cardinal unanimous mechanisms is $\Theta(k)$; in fact, this is true even when the districts are asymmetric and when $f_{\text{over}}$ is randomized but $f_{\text{in}}$ is deterministic.
    
For ordinal mechanisms, we consider two cases: 
(a) mechanisms that use deterministic in-district rules $f_{\text{in}}$, and 
(b) fully-randomized mechanisms, where both $f_{\text{over}}$ and $f_{\text{in}}$ are randomized rules. 
For (a), we show that the best possible distortion is $\Theta(km^2)$. The upper bound follows from the bound on {\sc Plurality-of-Plurality} proven in \citep{distributed2020}; here, we provide an asymptotically matching lower bound assuming a natural universal tie-breaking rule. 
For (b), we prove a simple but very interesting result: For a well-studied class of randomized centralized voting rules called \emph{point-voting schemes} (e.g., see \cite{gibbard1977manipulation, Barbera1978}), there exists a distributed implementation so that there is no effect on the induced probability distribution, even for asymmetric districts. Simply put, using such rules it is possible to escape the ill effects of districts in terms of the distortion, even when the districts are asymmetric. From this result, it follows that there exists a distributed implementation of a well-known mechanism of \citet{boutilier2015optimal} that achieves distortion $O(\sqrt{m\log{m}})$, almost matching the best possible lower bound of $\Omega(\sqrt{m})$. 
    
\smallskip
\noindent 
{\bf Strategyproof Mechanisms.}
For \emph{strategyproof} mechanisms, which are resilient to strategic manipulation, we show that a best-possible distortion of $\Theta(nm)$ for deterministic mechanisms (and more generally mechanisms with a deterministic in-district rule) is easy to achieve by a variation of a dictatorship rule. For randomized mechanisms, since point-voting schemes are strategyproof, the bound $O(\sqrt{m\log{m}})$ carries over to this class as well. Results about deterministic strategyproof mechanisms are presented in Section~\ref{sec:deterministic}, and about randomized strategyproof mechanisms in Section~\ref{sec:randomized}.

\smallskip
\noindent 
{\bf Experiments.}
Finally, in Section~\ref{sec:experiments}, we perform experiments using real-world data and synthetic data to evaluate the effect of distributed decision making to the distortion in settings closer to practice. The main conclusions of our experimental results mirror that of our theoretical results in Sections~\ref{sec:deterministic} and \ref{sec:randomized}.

\subsection{Further Related Work}
The distortion literature is by now rather extensive, including topics such as single-winner voting \citep{boutilier2015optimal,anshelevich2018approximating,gkatzelis2020resolving,kempe22veto}, multi-winner voting~\citep{caragiannis2017subset,CSV22}, matching problems~\citep{Aris14,amanatidis2021matching}, and participatory budgeting~\citep{benade2017participatory}. 
Generally speaking, most works can be categorized as either studying a normalized utilitarian setting (e.g., \citep{procaccia2006distortion,boutilier2015optimal,Aris14,benade2017participatory,ebadian2022optimized}) 
or a metric preference setting (e.g., \citep{anshelevich2016blind,anshelevich2018approximating,gkatzelis2020resolving,CSV22,kempe22veto}). 
Some more recent works have also studied the interplay between information and distortion \citep{Amanatidis2020peeking,amanatidis2021matching,amanatidis2022don,mandal2019thrifty,mandal2020optimal,abramowitz2019awareness}, and there have also been several works on strategyproofness in the context of distortion~\citep{filos2014truthful,Aris14,bhaskar2018cardinal,bhaskar2018truthful-multiwinner,ebadian2022optimized}. 
We refer the reader to the survey of \citet{survey2021} for a detailed overview of the related literature.

Besides the aforementioned works on distributed voting, \citet{borodin2019primaries} studied a related two-stage setting in which the voters participate in a central election, but the candidates themselves come from local elections within the political parties' electorates. Beyond distortion, in the context of district-based elections, there have also been other works that have considered the degree of deviation from proportional representation (e.g., see \citep{bachrach2016misrepresentation} and references therein), and some works that have studied the complexity of manipulation (e.g., see \citep{elkind2021protecting,lewenberg2017divide,lev2019reverse,borodin2018big}).

\section{Preliminaries}\label{sec:prelims}
An instance $I$ of our problem is given by a tuple $I = (N,A,\mathbf{v},D)$.
There is a set $N$ of $n$ {\em agents} (or {\em voters}) that have preferences over a set $A$ of $m$ {\em alternatives} (or {\em candidates}). The preferences of each agent $i \in N$ are captured by a {\em valuation function} $v_i:A\rightarrow \mathbb{R}_{\geq 0}$ that maps every alternative $a \in A$ to a real non-negative value $v_i(a) = v_{ia}$. Following previous work, we assume that the valuation functions are normalized such that $\sum_{a \in A} v_{ia} = 1$ for every $i \in N$ ({\em unit-sum assumption}). Let $\mathbf{v} = (v_i)_{i \in N}$ be the {\em valuation profile} consisting of the valuation functions of all agents. The agents are also partitioned into a set $D$ of $k$ disjoint {\em districts}. 

For every district $d \in D$, let $N_d$ be the set of agents it contains, such that $\bigcup_{d \in D}N_d = N$. 
In the {\em symmetric} case, each district $d$ contains exactly $\lambda = n/k$ agents. 
In the {\em asymmetric} case, each district $d$ contains a number $n_d$ of agents. 
All our lower bounds follow by instances consisting of symmetric districts, whereas our upper bounds in Section~\ref{sec:randomized} hold for asymmetric districts. 

\subsection{Mechanisms}
Our goal is to choose an alternative to satisfy several criteria of interest. This choice must be done using a {\em distributed mechanism} that uses an {\em in-district} voting rule $f_{\text{in}}$ and an {\em over-districts} voting rule $f_{\text{over}}$ to implement the following two independent steps:
\begin{itemize}
    \item Step 1: For each district $d$, choose a {\em representative alternative} $a_d \in A$ by holding a {\em local election} based on $f_{\text{in}}$.
    \item Step 2: Choose a district representative as the winner based on $f_{\text{over}}$ by considering the districts as voters and their representatives as the candidates they approve. 
\end{itemize}
For simplicity we refer to such mechanisms as {\em $f_{\text{over}}$-of-$f_{\text{in}}$}.  
Different choices of $f_{\text{in}}$ and $f_{\text{over}}$ lead to different distributed mechanisms.
Note that the in-district rule can in general use various types of information about the preferences of the agents. For instance, it may be able to use exact {\em cardinal} information about the valuation functions, or only {\em ordinal} information that is induced by the values (i.e., rankings of alternatives that are consistent to the values of the agents for them). In the latter case, we will use $\sigma_i$ to denote the preference ranking of agent $i \in N$ so that $\sigma_i(a)$ is the rank of alternative $a\in A$ in the ranking of $i$, and $\sigma_i(a) < \sigma_i(b)$ if $v_i(a) \geq v_i(b)$; let $\boldsymbol{\sigma}=(\sigma_i)_{i \in N}$ be the {\em ordinal profile} consisting of the preference rankings of all agents. To be concise in the definitions below, let $\boldsymbol{\delta}(I)$ be the information about the preferences of the agents in instance $I = (N,A,\mathbf{v},D)$ that is used by a mechanism; that is, $\boldsymbol{\delta}(I) = \mathbf{v}$ in case of cardinal information, or $\boldsymbol{\delta}(I) = \boldsymbol{\sigma}$ in case of ordinal information.

We will focus on different classes of distributed mechanisms depending on the available information about the preferences of the agents at the district level (cardinal or ordinal), and also on whether their decision is deterministic or randomized (that is, they choose the district representatives or final winner based on probability distributions). 

\subsection{Social Welfare and Distortion}
Given an instance $I$, the {\em social welfare} of an alternative $a \in A$ is the total value that the agents have for $a$, that is, $\SW(a|I) = \sum_{i \in N} v_{ia}$. So, the expected social welfare achieved by a randomized distributed mechanism $M$ that chooses alternative $a \in A$ as the winner $w$ with probability $\Pr_M[w=a]$ is 
$$\mathbb{E}[\SW(M(I))] = \sum_{a \in A} \Pr_M[w=a] \cdot \SW(a|I).$$
The efficiency of a distributed mechanism is measured by the notion of {\em distortion}. The distortion of a distributed mechanism $M$ is the worst-case ratio (over all possible instances with $n$ agents, $m$ alternatives, and $k$ districts) of the maximum social welfare achieved by any alternative over the (expected) social welfare of the alternative chosen by the mechanism as the winner $w$, that is,
\begin{align*}
    \text{dist}(M) = \sup_{I} \frac{\max_{a \in A}\SW(a|I)}{\mathbb{E}[\SW(M(\boldsymbol{\delta}(I))]}.
\end{align*}
Clearly, $\text{dist}(M) \geq 1$. When the denominator in the definition of the distortion tends to $0$, we will say that the distortion is infinite or unbounded. Our goal is to identify the best possible distributed mechanisms in terms of distortion.

\subsection{Strategyproofness}
Another important property that we would like our mechanisms to satisfy is that of strategyproofness. A {\em strategyproof} mechanism makes decisions such that providing false information never leads to the selection of an alternative that an agent prefers over the alternative chosen when the agent provides truthful information. In particular, for any instance $I$, it must be the case that
$v_i(M(\boldsymbol{\delta}(I))) \geq v_i(M(\boldsymbol{\delta}(I')))$ for any agent $i \in N$, where $I'$ is the instance obtained when only agent $i$ reports information different than that in $I$.

\subsection{Some useful observations and properties}\label{sec:observations}
Before we present our technical results, let us briefly discuss some useful properties. 

\smallskip 
\noindent 
{\bf Locality of distributed mechanisms:} First, observe that any distributed mechanism $f_{\text{over}}$-of-$f_{\text{in}}$  satisfies a {\em locality} property in the following sense. 
A district $d$ (that is, the preferences of a number of agents) appears in different instances if in each of these instances there is a district with the same number of agents and the same information about theirs preferences as in $d$ (depending on what is required by the mechanism). Since the information is the same, the in-district rule $f_{\text{in}}$ must decide the same alternative as the representative of the district in all these instances. Similarly, in all instances where the mechanism has decided the same set of district representatives, the over-districts rule $f_{\text{over}}$ must decide the same final winner. 

\smallskip 
\noindent 
{\bf Distortion of distributed vs centralized:} Another useful observation is that the distortion of a distributed mechanism $f_{\text{over}}$-of-$f_{\text{in}}$ is at least as much as the distortion of the in-district centralized voting rule $f_{\text{in}}$. Indeed, when $k=1$, there is only one representative alternative chosen by $f_{\text{in}}$, and thus this alternative must be chosen as the winner by $f_{\text{over}}$; this is also true for instances with $k \geq 2$ districts which are all copies of one district. Consequently, the distortion of $f_{\text{in}}$ is a lower bound on the distortion of $f_{\text{over}}$-of-$f_{\text{in}}$.

\smallskip 
\noindent 
{\bf Strategyproofness:} Observe that for a distributed mechanism $f_{\text{over}}$-of-$f_{\text{in}}$ to be strategyproof it is necessary that the in-district rule $f_{\text{in}}$ is strategyproof. This again follows by how the mechanism would work in instances with a single district, in which case the over-districts rule $f_{\text{over}}$ does not play any role in the selection of the final winner. 

\smallskip 
\noindent 
{\bf Unanimity:} A few of our results will require the in-district rules $f_{in}$ to be unanimous. Unanimity stipulates that if all of the agents have the same alternative as the top preference, that alternative must be selected (with probability $1$). Unanimity is a very natural property of ``reasonable'' voting rules, especially deterministic ones. For randomized rules, there might be reasons to consider non-unanimous choices, e.g., see \cite{gibbard1977manipulation,filos2014truthful}.

\section{Deterministic mechanisms}\label{sec:deterministic}
We start with deterministic distributed mechanisms and focus explicitly on the case of symmetric districts in this section (that is, the size of each district is $\lambda$). 
When full information about the valuations of the agents is known at the district level, \citet{distributed2020} showed that the mechanism {\sc Plurality-of-Range-Voting}, which chooses the representative of each district to be the alternative with maximum social welfare for the agents in the district, has distortion $O(km)$. We show that this mechanism is asymptotically best possible over all possible deterministic distributed mechanisms that use unanimous in-district rules (but may not use {\sc Plurality} as the over-districts rule).  

\begin{theorem}\label{thm:dist-unanimous}
The distortion of any deterministic distributed mechanism with a unanimous in-district rule is $\Omega(km)$.
\end{theorem}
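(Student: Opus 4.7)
The plan is to construct, for any fixed mechanism $M=f_{\text{over}}\text{-of-}f_{\text{in}}$ with a unanimous in-district rule, a single instance whose distortion is $\Omega(km)$. The approach leverages two observations from Section~\ref{sec:observations}. First, unanimity pins down $f_{\text{in}}$ to return a specific alternative whenever every agent in a district strictly top-ranks it, so we can dictate district representatives by a suitable choice of valuations. Second, by the locality property, $f_{\text{over}}$ depends only on the representative profile, so its outcome is fixed once the representatives are fixed, independently of the underlying valuations.

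Concretely, I would first assume $m\geq k+1$ and pick $k+1$ distinct alternatives $a_1,\ldots,a_k,a^*$. Let $d^*\in\{1,\ldots,k\}$ be the district whose representative is selected by $f_{\text{over}}$ on input profile $(a_1,\ldots,a_k)$; this $d^*$ is determined by $M$ alone. Next, I would design valuations that push the winner's welfare to the minimum permitted by unanimity while pushing $\SW(a^*|I)$ as high as possible. In district $d^*$, every agent sets $v_i(a_{d^*})=1/m+\epsilon$ and distributes the remaining $1-1/m-\epsilon$ uniformly over the other $m-1$ alternatives, making $a_{d^*}$ the strict unanimous top at essentially the smallest value permitted by the unit-sum assumption. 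In every other district $d\neq d^*$, every agent sets $v_i(a_d)=1/2+\epsilon$, $v_i(a^*)=1/2-\epsilon$, and $v_i(b)=0$ for $b\notin\{a_d,a^*\}$, so that $a_d$ is the strict unanimous top while $a^*$ carries value close to $1/2$ for every agent outside $d^*$.

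By unanimity, the representative of district $d$ is $a_d$ for every $d$, so the mechanism's winner is $a_{d^*}$. Since $a_{d^*}\notin\{a_d,a^*\}$ for every $d\neq d^*$, only district $d^*$ contributes to $\SW(a_{d^*}|I)$, giving $\lambda(1/m+\epsilon)$, while each of the $k-1$ districts $d\neq d^*$ contributes $\lambda(1/2-\epsilon)$ to $\SW(a^*|I)$. Letting $\epsilon\to 0$, the distortion is at least $\frac{(k-1)\lambda/2}{\lambda/m}=\frac{(k-1)m}{2}=\Omega(km)$. The main obstacle lies in the second step: to obtain both an $m$ factor and a $k$ factor in the ratio simultaneously, one must drive the top value in district $d^*$ down to its minimum of roughly $1/m$ (a floor imposed jointly by unit-sum and strict unanimity) while ensuring that some alternative accumulates $\Theta(n)$ welfare across the other $k-1$ districts. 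Reserving a distinct alternative $a^*$ outside $\{a_1,\ldots,a_k\}$ is what makes this possible, since it prevents $a_{d^*}$ from inheriting any welfare from non-$d^*$ districts and cleanly separates the numerator from the denominator of the distortion ratio.
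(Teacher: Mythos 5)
Your proposal is correct and follows essentially the same construction as the paper's proof: one district whose unanimous favourite (the eventual winner) is held at value roughly $1/m$ by unit-sum, and $k-1$ districts that each give nearly $1/2$ to a reserved alternative outside the representative set. Your $a^*$ and $d^*$ play exactly the roles of the paper's $x$ and its without-loss-of-generality choice of $a_1$ as the over-districts winner.
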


\begin{proof}
Let $M$ be some deterministic distributed mechanism with a unanimous in-district rule. 
Without loss of generality, whenever there are $k$ distinct district representatives $\{a_1,\ldots,a_k\}$, we assume that $M$ chooses $a_1$ as the overall winner. Let $\varepsilon > 0$ be some positive infinitesimal and consider the following instance with $k$ districts $\{d_1, \ldots, d_k\}$ and $m > k$ alternatives:
\begin{itemize}
    \item In district $d_1$, all agents have value $1/m + \varepsilon$ for alternative $a_1$, and value $1/m - \varepsilon/(m-1)$ for any other alternative.
    \item For any $\ell \in \{2,\ldots,k\}$, in district $d_\ell$, all agents have value $1/2 + \varepsilon$ for alternative $a_\ell$, value $1/2 -\varepsilon$ for alternative $x$, and value $0$ for any other alternative. 
\end{itemize}
Since the in-district rule is unanimous, the district representatives are alternatives $\{a_1, \ldots, a_k\}$, and the overall winner is thus $a_1$. The social welfare of alternative $a_1$ is approximately $\lambda/m$, whereas the social welfare of alternative $x$ is approximately $k\cdot\lambda/2$, leading to distortion $\Omega(km)$.
\end{proof}

When only ordinal information about the preferences of the agents is available, \citet{distributed2020} showed that {\sc Plurality-of-Plurality}, which chooses the favorite alternative of most of the agents in a district as its representative and then the alternative that represents the most districts as the winner, has distortion $O(km^2)$. We show that this mechanism is asymptotically best possible among all ordinal distributed mechanisms (without any restrictions), thus improving upon the result of \citet{distributed2020} who showed that {\sc Plurality-of-Plurality} is best possible only within the class of mechanisms they studied. 

We first prove an easy but important lemma showing that when only ordinal information is available, to achieve finite distortion, it is necessary the representative of each district to be some alternative that is the favorite of at least one agent in the district. 

\begin{lemma}\label{lem:top-or-infinite}
The representative of any district must be some top-ranked alternative, otherwise the distortion is infinite. 
\end{lemma}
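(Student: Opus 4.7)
The plan is to argue by contradiction, leveraging the locality property from Section~\ref{sec:observations}. Suppose that on some ordinal district profile $f_{\text{in}}$ returns an alternative $a$ that is not top-ranked by any of the $\lambda$ agents in that district. I would then construct a family of instances, parameterized by $\varepsilon>0$, on which the distortion of the mechanism grows without bound as $\varepsilon\to 0$.

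First I would use locality to form an instance whose $k$ districts all share this same ordinal profile. Then $f_{\text{in}}$ must output $a$ as the representative of every district, and $f_{\text{over}}$ has no choice but to declare $a$ the overall winner, since $a$ is the only available representative. Next I would choose cardinal valuations consistent with the common ordinal profile so as to drive $\SW(a)$ to zero: for each agent $i$, set $v_i(t_i) = 1-(m-1)\varepsilon$, where $t_i\neq a$ is $i$'s top-ranked alternative (which exists by hypothesis), and assign values of order $\varepsilon$, all slightly different, to the remaining alternatives in a way that respects $\sigma_i$ and keeps the profile unit-sum. This is feasible because the sub-top values are all of the same order and can be perturbed by amounts negligible compared to $\varepsilon$ itself to realize any ranking below the top.

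Comparing welfares on the resulting instance of $k\lambda$ agents gives $\SW(a) \leq k\lambda\varepsilon$, which tends to zero, while any alternative $b$ that is top-ranked by at least one agent in the district satisfies $\SW(b) \geq k(1 - (m-1)\varepsilon)$, which is bounded away from zero. Hence the distortion on this instance is of order at least $1/(\lambda\varepsilon)$ and becomes unbounded as $\varepsilon\to 0$, contradicting any finite bound. The argument is almost entirely structural; the only mildly delicate point is realizing the full ranking $\sigma_i$ via small perturbations of the order-$\varepsilon$ values while maintaining unit-sum normalization, which is routine and I do not anticipate as a serious obstacle.
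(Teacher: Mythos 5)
Your proposal is correct and follows essentially the same route as the paper: use locality to build an instance of $k$ copies of the offending district, force $a$ to win, and then pick consistent valuations concentrating each agent's value on her top choice. The paper's version is marginally simpler in that it assigns value $1$ to the top alternative and $0$ to all others outright (ties among the zero-valued alternatives are still consistent with any ranking), so the winner's welfare is exactly $0$ and no $\varepsilon$-perturbation or limiting argument is needed.
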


\begin{proof}
Let $d$ be a district and let $T$ be the set of top-ranked alternatives. Suppose that the representative of $d$ is chosen to be some alternative $x \not\in T$. Then, in any instance consisting of copies of $d$, the winner must be $x$. However, the valuation profile might be such that all agents have value $1$ for their favorite alternative and $0$ for any other alternative. Consequently, the social welfare of $x$ might be $0$, whereas the social welfare of any top-ranked alternative is positive, leading to infinite distortion. 
\end{proof}

We say that a district is {\em divided} if its $\lambda$ agents are partitioned into $m/2$ equal-sized sets such that all the $2\lambda/m$ agents in each set rank the same alternative first and different sets of agents have different top-ranked alternatives. By Lemma~\ref{lem:top-or-infinite}, the representative of such a district must be one of the top-ranked alternatives. The following lemma shows that choosing the representative of a divided district as the winner is, under some circumstances, a bad choice. 

\begin{lemma}\label{lem:divided-districts}
Suppose that some alternative $y_1$ is chosen as the winner by a deterministic ordinal distributed mechanism when the set of representatives is $\{y_1, \ldots, y_k\}$. If there exists a divided district that is represented by $y_1$, then there are $k-1$ districts with representatives $y_2,\ldots,y_k$, and altogether these $k$ districts define an instance such that the distortion of the mechanism is $\Omega(k m^2)$.
\end{lemma}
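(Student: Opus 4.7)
The plan is to build a concrete $k$-district instance that places the given divided district $d^*$ (with representative $y_1$) alongside $k-1$ carefully designed companion districts, and then to exploit the freedom in picking cardinal valuations consistent with the chosen ordinal profile so as to make $y_1$'s social welfare roughly $m^2$ times smaller than that of a well-chosen witness alternative $x$. I may assume throughout that the mechanism has finite distortion (otherwise the claim is immediate), so that by \cref{lem:top-or-infinite} every district representative must be a top-ranked alternative of some agent in its district.

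For the companion districts, I would pick an alternative $x \notin \{y_1,\ldots,y_k\}$ (possible whenever $m>k$) and, for each $\ell \in \{2,\ldots,k\}$, let $d_\ell$ consist of $\lambda$ agents that all share the same ranking $y_\ell \succ x \succ \cdots$. Since $y_\ell$ is the unique top-ranked alternative in $d_\ell$, \cref{lem:top-or-infinite} forces $f_{\text{in}}$ to return $y_\ell$ as the representative of $d_\ell$. Combined with $d^*$, the resulting multiset of representatives is exactly $\{y_1,\ldots,y_k\}$, so by the locality observation in \cref{sec:observations} together with the hypothesis of the lemma, $f_{\text{over}}$ selects $y_1$ as the overall winner.

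I would then fix cardinal valuations consistent with those rankings in order to inflate the distortion. In $d^*$, the $2\lambda/m$ agents of the group that top-ranks $y_1$ receive the nearly uniform valuation $v_i(y_1) = 1/m+\varepsilon$ and $v_i(a) = (1-1/m-\varepsilon)/(m-1)$ for $a \neq y_1$, whereas the agents of every other group of $d^*$ concentrate essentially all of their unit mass on their own top-ranked alternative (so that they contribute only negligible value to $y_1$). In each companion district $d_\ell$ (with $\ell \geq 2$), I would set $v_i(y_\ell) = 1/2+\varepsilon$, $v_i(x) = 1/2-\varepsilon$, and $v_i(a) = 0$ elsewhere; each of these choices respects the ranking used above to fix the representative.

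A direct count then gives $\SW(y_1) \approx 2\lambda/m^2$, since only the $2\lambda/m$ agents of the single group in $d^*$ that top-ranks $y_1$ contribute non-negligibly to the welfare of $y_1$, whereas $\SW(x) \geq (k-1)\lambda(1/2-\varepsilon)$ already from the companion districts. Taking the ratio yields the desired $\Omega(km^2)$ distortion. The main obstacle I anticipate is justifying that the hypothesis of the lemma---that the overall winner is $y_1$ whenever the multiset of representatives equals $\{y_1,\ldots,y_k\}$---is strong enough to let me slot the given divided district into position $1$ and design the remaining $k-1$ districts freely as above; this is exactly what the locality property guarantees, and it is also what lets me invoke \cref{lem:top-or-infinite} independently inside each companion district.
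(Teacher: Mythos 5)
Your proposal is correct and follows essentially the same route as the paper's proof: the same companion districts with rankings $y_\ell \succ x \succ \cdots$ forced to representative $y_\ell$ via \cref{lem:top-or-infinite}, the same use of locality to invoke the hypothesis on the representative set $\{y_1,\ldots,y_k\}$, and the same valuation profile making the $y_1$-group of the divided district nearly indifferent while $x$ collects welfare $\Theta(k\lambda)$ from the companions. The only differences are cosmetic ($\varepsilon$-perturbations and counting $\SW(x)$ over $k-1$ rather than $k$ districts), neither of which affects the $\Omega(km^2)$ bound.
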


\begin{proof}
Let $M$ be a deterministic ordinal distributed mechanism that selects $y_1$ as the winner when the set of representatives is $\{y_1, \ldots, y_k\}$, and let $d$ be the divided district that is represented by $y_1$.
Consider the following $k$ districts:
\begin{itemize}
    \item The first district is a copy of $d$. 
    \item For every $\ell \in \{2,\ldots,k\}$, the $\ell$-th district is such that all agents therein rank $y_\ell$ first, $x \not\in \{y_1,\ldots,y_k\}$ second, and then all other alternatives. By Lemma~\ref{lem:top-or-infinite}, $M$ must choose $y_\ell$ as the representative of the $\ell$-th district, as this is the only top-ranked alternative. 
\end{itemize}
So, indeed the set of representatives is $\{y_1,\ldots,y_k\}$ and $M$ chooses $y_1$ as the winner by assumption.
One possible valuation profile is the following:
\begin{itemize}
    \item In the first, divided district, the $2\lambda/m$ agents that rank $y_1$ first have value $1/m$ for all alternatives, and the remaining agents all have value $1$ for their favorite alternative.
    \item For every $\ell \in \{2,\ldots,k\}$, all agents in the $\ell$-th district have value $1/2$ for their two favorite alternatives ($y_\ell$ and $x$).
\end{itemize}
Consequently, the social welfare of $y_1$ is $\lambda/m^2$ whereas the social welfare of $x$ is approximately $k\cdot \lambda/2$, and thus the distortion is $\Omega(k m^2)$.
\end{proof}

Lemma~\ref{lem:divided-districts} shows that deterministic ordinal distributed mechanisms with distortion $o(km^2)$ must not output the representative of a divided district as the winner when it is given a set of districts with different representatives. However, as we show in the proof of the next theorem, there are instances where such a choice is inevitable, and thus the distortion is $\Omega(km^2)$.

\begin{theorem}
The distortion of any deterministic ordinal distributed mechanism is $\Omega(km^2)$.
\end{theorem}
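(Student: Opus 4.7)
The plan is to construct an instance in which the mechanism is forced into the scenario of Lemma~\ref{lem:divided-districts}, namely one where the overall winner is the representative of a divided district. Once this is done, that lemma hands us the $\Omega(km^2)$ lower bound directly.

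Concretely, I would first take an instance with $k$ divided districts, chosen so that $\lambda$ is divisible by $m/2$ and $m$ is large enough (say $m \geq 2k+1$) to leave room for the extra alternative $x$ used in the construction of Lemma~\ref{lem:divided-districts}. By Lemma~\ref{lem:top-or-infinite}, the in-district rule must select from each district $d$ some top-ranked alternative $y_d$ in $d$, for otherwise the distortion is already infinite. Let $\{y_1, \ldots, y_k\}$ be the resulting set of representatives and, without loss of generality, let $y_1$ be the winner chosen by the over-districts rule. Since every district in our instance is divided, the district that $y_1$ represents is in particular divided.

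Now I would appeal to the locality property of distributed mechanisms discussed in Section~\ref{sec:observations}: the over-districts rule depends only on the set of representatives, so the winner remains $y_1$ even on the instance of Lemma~\ref{lem:divided-districts}, in which the divided district represented by $y_1$ is kept and the remaining $k-1$ districts are replaced by the unanimous districts certifying the representatives $y_2,\ldots,y_k$. Plugging in the bad valuation profile from the lemma's proof then yields distortion $\Omega(km^2)$, completing the argument.

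I do not anticipate a serious obstacle; the central conceptual move is to recognize that making \emph{every} district divided eliminates every other possibility and forces Lemma~\ref{lem:divided-districts} to apply regardless of which representative the over-districts rule selects. The only small bookkeeping concerns the compatibility of the divided district's ordinal profile with the numerical values prescribed by Lemma~\ref{lem:divided-districts}, which holds automatically because that profile only further specifies values (not orderings) for agents whose top-ranked alternative has already been pinned down by the divided structure.
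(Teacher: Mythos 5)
Your overall strategy --- forcing the winner to be the representative of a divided district and then invoking Lemma~\ref{lem:divided-districts} --- is the right idea, and your use of locality to carry the winner over to the instance constructed in that lemma is sound. The gap is in the very first step: you cannot guarantee that your $k$ divided districts produce a list of representatives in which the eventual winner appears only once. Each divided district has $m/2$ top-ranked alternatives out of $m$, so for $k\geq 3$ the top-ranked sets of your $k$ districts must overlap by pigeonhole; an adversarial mechanism can then select a common top-ranked alternative $z$ as the representative of two (or more) of these districts and declare $z$ the winner. In that case your replacement step turns at least one of the $k-1$ remaining districts into a unanimous district for $z$ itself, which contributes $\lambda/2$ to $z$'s social welfare under the valuation profile of Lemma~\ref{lem:divided-districts}. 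The resulting distortion is then only $O(k)$ rather than $\Omega(km^2)$. In other words, Lemma~\ref{lem:divided-districts} implicitly requires the winner $y_1$ to be distinct from $y_2,\ldots,y_k$, and your construction does not secure this.

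This is exactly the difficulty the paper's proof is built to circumvent. Rather than presenting all $k$ divided districts simultaneously, it introduces them one at a time, giving the $\ell$-th divided district the shifted top-ranked set $\{a_\ell,b_\ell,\ldots,b_{m/2+\ell-2}\}$, which excludes the representatives $b_1,\ldots,b_{\ell-1}$ already extracted from earlier districts. At each step the mechanism either makes a choice that triggers Lemma~\ref{lem:divided-districts} immediately, or is forced to commit to a fresh representative; after $k$ steps it faces a set of $k$ \emph{distinct} representatives, each of which represents a divided district, and only then is the lemma applied. To repair your argument you would need a comparable device for guaranteeing distinctness, which is where the real work of the theorem lies.
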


\begin{proof}
Let $M$ be a deterministic ordinal distributed mechanism.
We focus on instances with $k$ districts and sets of alternatives $A \cup B \cup C \cup \{x\}$, where $A = \{a_1,\ldots,a_k\}$, $B = \{b_1, \ldots, b_{m/2+k-1}\}$, and $C = \{c_1,\ldots,c_{m-2k}\}$. Without loss of generality, suppose that when the district representatives are $\{a_1,\ldots,a_k\}$, $M$ chooses $a_1$ as the overall winner.

Let $d_1$ be a divided district with set of top-ranked alternatives $\{a_1,b_1,\ldots,b_{m/2-1}\}$. By Lemma~\ref{lem:divided-districts}, if $a_1$ is the representative of $d_1$, then there exists an instance such that the distortion of $M$ is $\Omega(km^2)$. So, suppose that the representative of $d_1$ is some other top-ranked alternative, say $b_1$. 
Again by Lemma~\ref{lem:divided-districts}, if $b_1$ is chosen as the winner whenever she is part of a representative set consisting of $k$ distinct alternatives, then the distortion of $M$ would be $\Omega(km^2)$. So, let us assume that when the district representatives are $\{b_1,a_2,\ldots,a_k\}$, the winner is an alternative different than $b_1$, say $a_2$. 

We can now repeat this argument step by step for each alternative $a_\ell$, $\ell \in \{2,\ldots,k\}$. In particular, let $d_\ell$ be a divided district with top-ranked alternatives $\{a_\ell,b_\ell,\ldots,b_{m/2+\ell-2}\}$ (note that alternatives $b_1,\ldots,b_{\ell-1}$ do not appear as top-ranked alternatives in $d_\ell$). By Lemma~\ref{lem:divided-districts}, if $a_\ell$ is the representative of $d_\ell$ then the distortion of $M$ is $\Omega(km^2)$, so the representative is some other alternative from the set $\{b_\ell,\ldots,b_{m/2+\ell-2}\}$, say $b_\ell$. Again by Lemma~\ref{lem:divided-districts}, if $b_\ell$ is chosen as the winner whenever she is part of a representative set consisting of $k$ distinct alternatives, then the distortion of $M$ would be $\Omega(km^2)$. So, when the district representatives are $\{b_1,\ldots,b_\ell, a_{\ell+1},\ldots,a_k\}$, the winner is an alternative not in $\{b_1,\ldots,b_\ell\}$, say $a_\ell$. 

The last step of this repeated argument leads to the lower bound of $\Omega(km^2)$: We have reached an instance with set of representatives $\{b_1,\ldots,b_k\}$ all of whom are representative of some divided district, and thus no matter who of them is chosen as the winner, by Lemma~\ref{lem:divided-districts} there exists an instance that includes the corresponding divided district and $k-1$ unanimous districts (like in the proof of the lemma) such that the distortion is  $\Omega(km^2)$.
\end{proof}

Finally, let us discuss the case of deterministic strategyproof distributed mechanisms. \citet{bhaskar2018cardinal} showed that the distortion of any deterministic centralized strategyproof voting rule (including those that have access to the valuation functions) is $\Theta(nm)$. From the discussion Section~\ref{sec:observations}, we directly obtain a lower bound of $\Omega(nm)$ for the distributed setting as well. A tight upper bound is also not hard to derive by considering the straightforward {\sc First-of-First} mechanism which works as follows:
\begin{itemize}
    \item For each district $d$, choose the favorite alternative of the first agent therein as the representative.
    \item Choose the representative of the first district as the winner.
\end{itemize}

\begin{theorem}
{\sc First-of-First} is strategyproof and achieves an asymptotically best possible distortion of $\Theta(nm)$ within the class of deterministic strategyproof distributed mechanisms.
\end{theorem}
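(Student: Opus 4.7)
The proof naturally splits into three components: strategyproofness of \textsc{First-of-First}, the $O(nm)$ upper bound on its distortion, and the matching $\Omega(nm)$ lower bound for any deterministic strategyproof distributed mechanism. The last is essentially inherited from the centralized setting via the observations in Section~\ref{sec:observations}, so the real work lies in the first two parts, both of which should be short.

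For strategyproofness, the plan is to observe that the only agent whose report influences the outcome is the first agent of the first district: her top-ranked alternative is chosen as that district's representative, and the over-districts rule picks the representative of the first district as the winner. This agent's truthful report yields her favourite alternative, so no deviation can make her strictly better off. Every other agent is a non-dictator whose report is ignored, so trivially cannot benefit from misreporting. Hence \textsc{First-of-First} is strategyproof.

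For the $O(nm)$ upper bound, let $i^\star$ be the first agent in the first district and let $w$ denote her top-ranked alternative, which is the winner. By the unit-sum normalization, $v_{i^\star}(w) \geq 1/m$, so $\SW(w \mid I) \geq v_{i^\star}(w) \geq 1/m$. On the other hand, the social welfare of any alternative is at most $\sum_{i \in N} v_i(a) \leq n$, again by unit-sum. The distortion is therefore at most $n / (1/m) = nm$.

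For the $\Omega(nm)$ lower bound, I would invoke the two observations from Section~\ref{sec:observations}: (i)~if an $f_{\text{over}}$-of-$f_{\text{in}}$ mechanism is strategyproof, then $f_{\text{in}}$ must be strategyproof as a centralized rule (by considering a single-district instance); and (ii)~the distortion of any $f_{\text{over}}$-of-$f_{\text{in}}$ mechanism is at least the distortion of $f_{\text{in}}$ applied centrally. Combining these with the $\Omega(nm)$ lower bound of \citet{bhaskar2018cardinal} for centralized strategyproof cardinal voting rules on $n$ agents yields $\Omega(nm)$ for the distributed setting as well. The main (and minor) subtlety is just to make explicit that the single-district instances realizing the centralized lower bound have $n$ agents, so the dependence on $n$ carries over verbatim; no new construction is required.
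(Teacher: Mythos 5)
Your proposal is correct and follows essentially the same route as the paper: the dictatorial structure gives strategyproofness, the unit-sum normalization gives $\SW(w) \geq 1/m$ against a maximum of $n$ for the $O(nm)$ upper bound, and the lower bound is inherited from the centralized $\Omega(nm)$ bound of \citet{bhaskar2018cardinal} via the observations in Section~\ref{sec:observations} (which the paper states just before the theorem rather than inside the proof).
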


\begin{proof}
The mechanism is clearly strategyproof since the winner is the favorite alternative of the first agent of the first district who acts as a dictator. Since the winner is ranked first by an agent, the social welfare of the mechanism is at least $1/m$. The maximum possible social welfare is $n$, and thus the distortion is $O(nm)$.
\end{proof}

\section{Randomized mechanisms}\label{sec:randomized}
We start our discussion on randomized distributed mechanisms by analyzing a general class of mechanisms that we call {\sc Uniform-of-$\delta$-Approximate}. A mechanism $M$ in this class works as follows: 
\begin{itemize}
    \item For each district $d$, $M$ chooses the representative $a_d$ according to some centralized voting rule $f_{\text{in}}$ that has distortion at most $\delta$.
    \item $M$ chooses the winner uniformly at random from the set of representatives. 
\end{itemize}
Picking the winner uniformly at random from the representatives that have been selected seems to be the most natural choice as there is not much information about the preferences of the agents in the districts, and essentially all we can do is assign higher proportional probability to an alternative that is representative of more districts. We have the following result. 

\begin{theorem} \label{thm:uniform-of-delta-approximate}
The distortion of any {\sc Uniform-of-$\delta$-Approximate} mechanism is $O(k\delta)$.
\end{theorem}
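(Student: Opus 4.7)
The plan is to lower-bound the expected social welfare of the mechanism by $\SW(a^*|I)/(k\delta)$, where $a^* \in \arg\max_{a \in A} \SW(a|I)$ is an optimal alternative. The argument decomposes the global social welfare into per-district contributions and invokes the distortion guarantee of $f_{\text{in}}$ separately in each district.

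First, I would fix an instance $I$ and, for each district $d \in D$, write $\SW_d(a) = \sum_{i \in N_d} v_{ia}$ for the local social welfare of alternative $a$ inside district $d$. Note that $\SW(a|I) = \sum_{d \in D} \SW_d(a)$ for every $a$, and in particular $\SW(a_d|I) \geq \SW_d(a_d)$ since all summands are non-negative. Let $a_d^\star$ be an alternative maximizing $\SW_d(\cdot)$ inside district $d$. Because $f_{\text{in}}$ has distortion at most $\delta$ when applied to the local instance consisting only of the agents in $N_d$, the representative $a_d$ chosen by $f_{\text{in}}$ satisfies
\[
  \SW_d(a_d) \;\geq\; \frac{\SW_d(a_d^\star)}{\delta} \;\geq\; \frac{\SW_d(a^*)}{\delta},
\]
where the second inequality holds since $a_d^\star$ is, by definition, at least as good as $a^*$ with respect to $\SW_d(\cdot)$.

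Next, I would write down the expected social welfare of the mechanism. Since the winner is drawn uniformly from the multiset of representatives $\{a_d : d \in D\}$, each district contributes probability mass $1/k$ to its representative, so
\[
  \mathbb{E}[\SW(M(I))] \;=\; \frac{1}{k}\sum_{d \in D} \SW(a_d|I) \;\geq\; \frac{1}{k}\sum_{d \in D} \SW_d(a_d) \;\geq\; \frac{1}{k\delta} \sum_{d \in D} \SW_d(a^*) \;=\; \frac{\SW(a^*|I)}{k\delta}.
\]
Rearranging gives $\SW(a^*|I)/\mathbb{E}[\SW(M(I))] \leq k\delta$, which is the desired $O(k\delta)$ distortion bound.

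There is no real obstacle here beyond getting the bookkeeping right: the only subtlety is making sure to compare $a_d$ against $a^*$ \emph{restricted to district $d$} (rather than against $a^*$ globally) when invoking the distortion of $f_{\text{in}}$, and then exploiting that social welfare aggregates additively across districts so that the per-district bounds telescope. The argument works verbatim whether districts are symmetric or asymmetric, which foreshadows the later claim that the $\Theta(k)$ upper bound for the cardinal randomized setting extends to the asymmetric case.
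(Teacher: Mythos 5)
Your proof is correct and follows essentially the same route as the paper's: decompose the welfare per district, lower-bound $\SW(a_d|I)$ by $\SW_d(a_d)$ via non-negativity, apply the in-district distortion guarantee against the global optimum restricted to each district, and sum. The only cosmetic difference is that the paper writes the per-district bound in expectation (since $f_{\text{in}}$ may itself be randomized, e.g.\ {\sc Stable-Lottery}), whereas you state it deterministically; the argument is unchanged by linearity of expectation.
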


\begin{proof}
Consider an arbitrary instance. Let $o$ be the optimal alternative, $a_d$ the representative of district $d$, and $w$ the final winner. Denote by $\SW_d(x)$ the social welfare of alternative $x$ only from the agents in $d$; clearly, $\SW(x) = \sum_{d \in D}\SW_d(x)$. The expected social welfare of the mechanism is
\begin{align*}
\mathbb{E}[\SW(M)] 
&= \sum_{a \in A} \Pr[w=a] \cdot \SW(a) \\
&= \frac{1}{k} \sum_{a \in A} \left( \sum_{d \in D} \Pr[a_d = a] \right) \SW(a) \\
&= \frac{1}{k} \sum_{d \in D} \sum_{a \in A} \Pr[a_d = a] \cdot \SW(a) \\
&= \frac{1}{k} \sum_{d \in D} \mathbb{E}[\SW(a_d)] \\
&\geq \frac{1}{k} \sum_{d \in D} \mathbb{E}[\SW_d(a_d)]
\end{align*}
Since $a_d$ is chosen based on a voting rule with distortion at most $\delta$, we have that $\mathbb{E}[\SW(a_d)] \geq \frac{1}{\delta} \cdot \SW_d(o)$. Combining this together with the fact that $\SW(o) = \sum_{d \in D} \SW_d(o)$, and using the linearity of expectation, we obtain
\begin{align*}
\mathbb{E}[\SW(M)] 
&\geq \frac{1}{k} \sum_{d \in D} \mathbb{E}[\SW_d(a_d)] \\
&\geq \frac{1}{k} \sum_{d \in D} \frac{1}{\delta}  \cdot \SW_d(o) \\
&= \frac{1}{k\delta} \cdot \SW(o).
\end{align*}
Hence, the distortion of the mechanism is at most $k\delta$.
\end{proof}

Theorem~\ref{thm:uniform-of-delta-approximate} is a simple composition theorem, analogous to the one presented by \citet{anshelevich2022distortion} for the metric setting. Based on it, we can define randomized distributed mechanisms with proven distortion guarantees by appropriately choosing the in-district rule. Before we continue, observe that the sizes of the districts do not appear in the proof of Theorem~\ref{thm:uniform-of-delta-approximate}, and thus the distortion of any {\sc Uniform-of-$\delta$-Approximate} mechanism is $O(k\delta)$ even if the districts are asymmetric. So, the distortion of the mechanism depends on the number of agents only if the distortion $\delta$ of the in-district rule depends on the number of agents. 

If cardinal information is available at the district level, by using {\sc Range-Voting} with $\delta=1$ as the in-district rule, we obtain the following. 

\begin{corollary} \label{cor:uniform-of-range-voting}
The distortion of {\sc Uniform-of-Range-Voting} is $O(k)$.
\end{corollary}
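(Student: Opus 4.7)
The plan is essentially a direct invocation of Theorem~\ref{thm:uniform-of-delta-approximate}, so the proof should be very short. The mechanism {\sc Uniform-of-Range-Voting} is an instance of the {\sc Uniform-of-$\delta$-Approximate} class where the in-district rule $f_{\text{in}}$ is {\sc Range-Voting}. Since {\sc Range-Voting} has access to the cardinal valuations and, by definition, outputs the alternative maximizing $\sum_{i \in N_d} v_{ia}$ within the district, it coincides with the welfare-optimal alternative at the district level. Hence, as a centralized rule it achieves distortion $\delta = 1$.

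With that single observation in hand, I would simply instantiate Theorem~\ref{thm:uniform-of-delta-approximate}: plugging $\delta = 1$ into the $O(k\delta)$ bound yields distortion $O(k)$ for the mechanism. Since the theorem holds even for asymmetric districts (as noted in the paragraph following its proof), no extra work is needed to extend the guarantee. There is really no obstacle here — the content of the corollary is entirely subsumed by the composition theorem, and the only thing to point out is the optimality of {\sc Range-Voting} as an in-district rule under the unit-sum model.
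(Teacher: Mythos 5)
Your proposal is correct and is exactly the paper's argument: the corollary is stated as an immediate instantiation of Theorem~\ref{thm:uniform-of-delta-approximate} with {\sc Range-Voting} as the in-district rule, which has distortion $\delta = 1$ since it selects the welfare-maximizing alternative in each district. Nothing further is needed.
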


\noindent 
If only ordinal information about the preferences of the agents is given at the district level, then we can use {\sc Plurality} with $\delta = O(m^2)$ and the randomized rule {\sc Stable-Lottery} mechanism of \citet{ebadian2022optimized} with $\delta = O(\sqrt{m})$ as the in-district rule to obtain the following results.

\begin{corollary} \label{cor:uniform-of-plurality}
The distortion of {\sc Uniform-of-Plurality} is $O(k m^2)$.
\end{corollary}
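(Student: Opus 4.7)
The plan is to observe that {\sc Uniform-of-Plurality} is, by definition, a special case of the {\sc Uniform-of-$\delta$-Approximate} template with $f_{\text{in}}$ chosen to be the centralized {\sc Plurality} rule. Therefore the corollary will follow immediately from Theorem~\ref{thm:uniform-of-delta-approximate} once we plug in the correct value of $\delta$ for {\sc Plurality} in the normalized (unit-sum) utilitarian setting.

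So the first step I would take is to recall the standard distortion bound of the centralized {\sc Plurality} rule under unit-sum valuations, namely $\delta = O(m^2)$. A short justification can be given in-line: for an agent whose top-ranked alternative is $t_i$, the unit-sum normalization forces $v_i(t_i) \geq 1/m$, whereas the plurality winner $w$ is the top choice of at least $n/m$ agents, so $\SW(w) \geq n/m^2$; since any alternative $o$ satisfies $\SW(o) \leq n$, the distortion is at most $m^2$. (This is also the bound of \citet{caragiannis2017subset} and other prior work referenced in the survey cited in the paper.)

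The second step is to instantiate Theorem~\ref{thm:uniform-of-delta-approximate} with this choice. Since the theorem guarantees a distortion of $O(k\delta)$ for any in-district rule of distortion at most $\delta$, substituting $\delta = O(m^2)$ yields the claimed $O(km^2)$ bound. No additional argument about district sizes is needed because, as noted just after the theorem, the composition result already holds for asymmetric districts.

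There is essentially no technical obstacle here; the only content beyond the theorem is the well-known $O(m^2)$ centralized distortion of {\sc Plurality}, which we either cite or sketch in one sentence as above.
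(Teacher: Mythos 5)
Your proposal is correct and matches the paper's own (implicit) argument exactly: the paper derives this corollary by instantiating Theorem~\ref{thm:uniform-of-delta-approximate} with {\sc Plurality} as the in-district rule and its known centralized distortion $\delta = O(m^2)$ under unit-sum valuations. Your in-line justification of that $O(m^2)$ bound (each agent's top alternative has value at least $1/m$, and the plurality winner is top-ranked by at least a $1/m$ fraction of the agents) is also sound.
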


\begin{corollary} \label{cor:uniform-of-stable-lottery}
The distortion of {\sc Uniform-of-Stable-Lottery} is $O(k \sqrt{m})$.
\end{corollary}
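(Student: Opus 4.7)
The plan is to obtain this as an immediate consequence of the composition result in Theorem~\ref{thm:uniform-of-delta-approximate}, so most of the work is really just plugging in the right black box. The mechanism {\sc Uniform-of-Stable-Lottery} fits the template of {\sc Uniform-of-$\delta$-Approximate}: in each district we run the {\sc Stable-Lottery} rule of \citet{ebadian2022optimized} to select a (random) district representative, and then we pick the final winner uniformly at random from the $k$ representatives.

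The single fact we need to import is that, in the centralized ordinal normalized setting, {\sc Stable-Lottery} has distortion $\delta = O(\sqrt{m})$; this is exactly what is proved by \citet{ebadian2022optimized}. I would state this as the input to the composition theorem and then simply apply Theorem~\ref{thm:uniform-of-delta-approximate} with this value of $\delta$, which yields an upper bound of $k \cdot O(\sqrt{m}) = O(k\sqrt{m})$ on the distortion of {\sc Uniform-of-Stable-Lottery}.

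There is essentially no obstacle here beyond verifying that the assumptions of Theorem~\ref{thm:uniform-of-delta-approximate} are met: {\sc Stable-Lottery} is an ordinal voting rule (so it is well-defined at the district level whenever only ordinal information is available), and its distortion bound holds for any number of voters, so it can be applied in each district regardless of whether the districts are symmetric or asymmetric. The proof of Theorem~\ref{thm:uniform-of-delta-approximate} does not use the sizes of the districts, so the corollary inherits the same robustness with respect to district sizes. Hence the whole proof collapses to a one-line invocation of the composition theorem together with the known centralized bound on {\sc Stable-Lottery}.
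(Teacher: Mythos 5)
Your proposal is correct and matches the paper exactly: the corollary is obtained by instantiating Theorem~\ref{thm:uniform-of-delta-approximate} with the {\sc Stable-Lottery} rule of \citet{ebadian2022optimized} as the in-district rule, using its known centralized distortion bound of $\delta = O(\sqrt{m})$. Your additional remarks about the bound holding for asymmetric districts are also consistent with the paper's discussion following the theorem.
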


An important question to ask next is under what circumstances the aforementioned upper bounds of Corollaries~\ref{cor:uniform-of-range-voting}, \ref{cor:uniform-of-plurality} and \ref{cor:uniform-of-stable-lottery} are tight. First, we show that {\sc Uniform-of-Range-Voting} is the best among mechanisms with unanimous in-district rules which may even use cardinal information. 

\begin{theorem}
The distortion of any randomized distributed mechanism with a unanimous in-district rule is $\Omega(k)$. 
\end{theorem}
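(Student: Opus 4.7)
The plan is to mirror the construction used in Theorem~\ref{thm:dist-unanimous}, exploiting unanimity to pin down each district's representative with probability $1$, so that whatever randomization $f_{\text{over}}$ performs is already constrained to a set of alternatives whose social welfare I control.

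First I would set up $k$ districts $d_1, \ldots, d_k$ and distinguished alternatives $a_1, \ldots, a_k, x$. For each $\ell \geq 2$, every agent in $d_\ell$ is given value $1/2 + \varepsilon$ for $a_\ell$, value $1/2 - \varepsilon$ for $x$, and $0$ for all other alternatives. In $d_1$, every agent is given value $1/m + \varepsilon$ for $a_1$ and value $1/m - \varepsilon/(m-1)$ for each of the other alternatives. In each district all agents share the same top-ranked alternative, so unanimity forces $f_{\text{in}}$ (deterministic or randomized) to output $a_\ell$ as the representative of $d_\ell$ with probability $1$. Consequently, regardless of the randomness in $f_{\text{over}}$, the winner of the mechanism lies in $\{a_1, \ldots, a_k\}$ almost surely.

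Next I would tally social welfares. Because each $a_\ell$ has positive value only for agents in $d_\ell$, we get $\SW(a_1) \approx \lambda/m$ and $\SW(a_\ell) \approx \lambda/2$ for $\ell \geq 2$, up to $O(\varepsilon \lambda)$ perturbations. Therefore, for any probability distribution $(p_1, \ldots, p_k)$ that $f_{\text{over}}$ induces on $\{a_1, \ldots, a_k\}$,
\begin{align*}
\mathbb{E}[\SW(M)] = \sum_{\ell=1}^{k} p_\ell \, \SW(a_\ell) \leq \max_{\ell} \SW(a_\ell) \leq \lambda/2 + O(\varepsilon \lambda).
\end{align*}
On the other hand, $x$ gathers value $(1/2 - \varepsilon)\lambda$ from each of the $k-1$ districts $d_2, \ldots, d_k$, giving $\SW(x) = \Omega(k\lambda)$. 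Taking the ratio and sending $\varepsilon \to 0$ yields distortion $\Omega(k)$, as claimed.

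The only delicate point is confirming that no clever randomization by $f_{\text{over}}$ can evade the bound, but this is immediate: the set of representatives is deterministically $\{a_1, \ldots, a_k\}$ in this instance, and the over-districts rule is obliged to choose one of them. Thus the argument applies uniformly to every combination of a unanimous (possibly randomized) in-district rule with any (possibly randomized) over-districts rule, which is the main point the theorem wants us to cover.
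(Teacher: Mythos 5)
Your proof is correct and follows essentially the same approach as the paper: use unanimity to force the representative set to be $\{a_1,\ldots,a_k\}$ with probability $1$, note that each $a_\ell$ has social welfare at most about $\lambda/2$ (accrued only within its own district) while $x$ accumulates welfare $\Omega(k\lambda)$ across districts, so any distribution over the representatives yields distortion $\Omega(k)$. The only difference is cosmetic: the paper makes all $k$ districts of the $1/2\pm\varepsilon$ form, whereas you keep the special district $d_1$ from the deterministic construction, which slightly weakens the constant (giving $k-1$ rather than $k$) but changes nothing asymptotically.
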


\begin{proof}
Let $\varepsilon > 0$ be a positive infinitesimal. 
Consider an instance with the following $k$ symmetric districts: For any $\ell \in [k]$, in district $d_\ell$, all $\lambda$ agents therein have value $1/2+\varepsilon$ for alternative $a_\ell$, $1/2-\varepsilon$ for alternative $x$, and $0$ for any other alternative. Since, the in-district rule is unanimous, the representative of district $d_\ell$ must be $a_\ell$ with probability $1$. Hence, no matter what the probability of choosing a district representative as the winner is, the expected social welfare of the mechanism is $\lambda \cdot (1/2+\varepsilon)$. However, the social welfare of alternative $x$ is $k\cdot \lambda \cdot (1/2 - \varepsilon)$, and thus the distortion is $\Omega(k)$.
\end{proof}

If we consider non-unanimous in-district rules, but require the in-district rule to be deterministic, then we can show a weaker lower bound of $\Omega(\sqrt{k})$; notice that the theorem also implies the same bound for fully deterministic distributed mechanisms without unanimous in-district rules.

\begin{theorem}
The distortion of any randomized distributed mechanism with a deterministic in-district rule is $\Omega(\sqrt{k})$. 
\end{theorem}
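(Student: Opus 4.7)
The plan is an adversarial construction tailored to any fixed deterministic $f_{\text{in}}$. I would fix $m = k+1$ alternatives $\{a_1, \ldots, a_k, Y\}$ and, for $\ell \in [k]$ and $\lambda_1 \in \{0, 1, \ldots, \lambda\}$, consider district templates $d(\lambda_1, a_\ell, Y)$ in which $\lambda_1$ agents have value $1$ on $a_\ell$ and the remaining $\lambda - \lambda_1$ agents have value $1$ on $Y$. Assuming $M$ has finite distortion (else we are done), on a single-district instance of this type $f_{\text{in}}$ must output $a_\ell$ or $Y$ (any other alternative has welfare zero there), and it must output $a_\ell$ when $\lambda_1 = \lambda$, so the threshold $\lambda^*(\ell) := \min\{\lambda_1 : f_{\text{in}}(d(\lambda_1, a_\ell, Y)) = a_\ell\}$ is well defined in $\{1, \ldots, \lambda\}$. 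I would set $\Lambda := \max_\ell \lambda^*(\ell)$, attained at some index $\ell^*$.

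Next I would analyze two instances. In $I_A$, district $\ell$ is $d(\lambda^*(\ell), a_\ell, Y)$, so $f_{\text{in}}$ elects $a_\ell$ in each district; the set of representatives is the pairwise-distinct set $\{a_1, \ldots, a_k\}$, each member of global welfare at most $\Lambda$, while $\SW(Y) \geq k(\lambda - \Lambda)$. Since the expected welfare of any over-districts rule is at most the maximum representative welfare, the distortion on $I_A$ is at least $A := k(\lambda - \Lambda)/\Lambda$. In $I_B$, all $k$ districts are copies of $d(\Lambda - 1, a_{\ell^*}, Y)$; by the minimality of $\Lambda = \lambda^*(\ell^*)$ every district elects $Y$, so the randomized over-districts rule outputs $Y$ with probability $1$, giving distortion at least $B := (\Lambda - 1)/(\lambda - \Lambda + 1)$.

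The adversary picks the larger of $A, B$. For $2 \leq \Lambda \leq \lambda - 1$, one computes $AB = k \cdot \frac{\Lambda - 1}{\Lambda} \cdot \frac{\lambda - \Lambda}{\lambda - \Lambda + 1} \geq k/4$, so $\max\{A, B\} \geq \sqrt{AB} \geq \sqrt{k}/2$ by AM--GM. The edge cases are straightforward: $\Lambda = 1$ already gives $A \geq k(\lambda - 1) \gg \sqrt{k}$, and $\Lambda = \lambda$ gives $B = \lambda - 1$, which is $\Omega(\sqrt{k})$ once we take $n \geq k(\sqrt{k}+1)$, a choice permitted since the lower bound is existential over $n$. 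The main obstacle is identifying the right construction: the natural grouping of $\sqrt{k}$ districts around each shared alternative only yields $\Omega(k^{1/4})$, because an adversarial $f_{\text{in}}$ can tune its threshold against a single test instance; the critical design choice is to use $k$ pairwise-distinct peripheral alternatives in $I_A$, which inflates the $\SW(Y)/\max_\ell \SW(a_\ell)$ ratio by a full factor of $k$ and lets the AM--GM combination with $I_B$ land at $\sqrt{k}$. A secondary subtlety is that $f_{\text{in}}$ may be asymmetric across the $\ell$'s, which is handled by defining $\Lambda$ as the \emph{maximum} of the per-index thresholds rather than assuming a single uniform value.
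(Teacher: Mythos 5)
Your proof is correct, but it takes a genuinely different route from the paper's. The paper's argument is a direct two-step construction: it crafts a district $d_\ell$ in which every agent has value $1/2$ for a private alternative $a_\ell$ and value $1/(2\sqrt{k})$ for each of $\sqrt{k}$ alternatives shared across all districts; the ``copies of one district'' argument forces $f_{\text{in}}$ to pick $a_\ell$ (else the distortion is already $\sqrt{k}$ or infinite), and then assembling $k$ such districts caps the mechanism's welfare at $\lambda/2$ while a shared alternative attains $\sqrt{k}\lambda/2$ --- the $\sqrt{k}$ is baked into the chosen valuation $1/(2\sqrt{k})$. You instead extract an adversarial threshold $\lambda^*(\ell)$ from $f_{\text{in}}$'s behaviour on two-alternative districts and play the instances $I_A$ and $I_B$ against each other, with the $\sqrt{k}$ emerging from an AM--GM tradeoff between the two distortion ratios; this is a legitimately different mechanism-dependent argument rather than a single universal hard instance. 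Each approach buys something: yours needs only $m=k+1$ alternatives and only $\{0,1\}$-supported unit-sum valuations, whereas the paper needs $m \geq k+\sqrt{k}$; conversely, the paper's construction works for any district size $\lambda\geq 1$, while your edge case $\Lambda=\lambda$ (an in-district rule demanding near-unanimity) yields only $B=\lambda-1$ and hence requires $\lambda=\Omega(\sqrt{k})$, i.e.\ $n=\Omega(k^{3/2})$ --- acceptable for an existential lower bound, but worth flagging as a parameter restriction the paper's proof avoids. All the individual steps check out: the well-definedness of $\lambda^*(\ell)\in\{1,\dots,\lambda\}$ follows from the infinite-distortion argument at $\lambda_1\in\{0,\lambda\}$, locality justifies transplanting the districts into $I_A$ and $I_B$, the over-districts rule is indeed confined to the representative set so its expected welfare is at most $\Lambda$ in $I_A$, and the bound $AB\geq k/4$ for $2\leq\Lambda\leq\lambda-1$ is a correct computation.
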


\begin{proof}
Consider a district $d_\ell$ in which all agents have value $1/2$ for alternative $a_\ell$, value $1/(2\sqrt{k})$ for each alternative in $\{b_1,\ldots,b_{\sqrt{k}}\}$, and $0$ for any other alternative. If the representative of this district is not $a_\ell$, then in instances consisting of copies of this district, the distortion is at least $\sqrt{k}$; in particular, it is at least that much if some alternative in $\{b_1,\ldots,b_{\sqrt{k}}\}$ is chosen and infinite if any other alternative is chosen. So, suppose that the representative of $d_\ell$ is $a_\ell$.

Next, consider an instance with $k$ symmetric districts $d_1, \ldots, d_k$. By the above discussion, for any $\ell \in [k]$, the representative of $d_\ell$ is alternative $a_\ell$ with social welfare $\lambda/2$ (note that only the agents of $d_\ell$ have positive value, equal to $1/2$, for $a_\ell$). Hence, no matter which district representative is chosen as the winner (or the probability distribution over the representatives), the (expected) social welfare of the mechanism is $\lambda/2$. In contrast, the social welfare of any alternative in $\{b_1,\ldots,b_{\sqrt{k}}\}$ is $k \cdot \lambda/(2\sqrt{k}) = \sqrt{k}\cdot \lambda/2$, and thus the distortion is $\sqrt{k}$.
\end{proof}

Next, we show that {\sc Uniform-of-Plurality} is the best possible among ordinal randomized distributed mechanisms with deterministic in-district rules, assuming an arbitrary but fixed ordering of the alternatives. 
This is quite surprising, as it shows that randomization over the districts is not better than just choosing an arbitrary alternative that is representative of the most districts (i.e., not better than {\sc Plurality-of-Plurality}).

\begin{theorem}\label{thm:ordinal-with-det-indistrict}
The distortion of any ordinal distributed mechanism with a deterministic in-district rule is $\Omega(km^2)$, when there exists an arbitrary but fixed tie-breaking ordering of the alternatives. 
\end{theorem}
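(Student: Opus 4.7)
The plan is to adapt the iterative construction from the deterministic ordinal lower bound proof in Section~\ref{sec:deterministic}, combined with a strengthened version of Lemma~\ref{lem:divided-districts} tailored to randomized over-districts rules. The main obstacle to overcome is that the original lemma only gives $\Omega(km^2)$ when the winner is a single specific alternative, whereas a randomized over-districts rule can in principle spread probability mass across multiple representatives and hedge towards those with higher social welfare. The fix is to ensure that \emph{every} representative simultaneously has low social welfare, which we do by making every district divided and by exploiting the fixed tie-breaking ordering $\tau$ to force $f_{\text{in}}$ to select a specific low-quality representative in each district.

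Concretely, let $\tau$ be the universal tie-breaking ordering on alternatives. I would identify $k$ distinct alternatives $y_1, \ldots, y_k$ (to serve as representatives) and a shared set $B$ of $m/2 - 1$ alternatives, with all elements of $B$ coming strictly after all $y_\ell$'s under $\tau$, together with a distinguished alternative $x \notin \{y_1, \ldots, y_k\} \cup B$. For each $\ell$ construct a divided district $d_\ell$ with top set $T_\ell = \{y_\ell\} \cup B$, and design the rankings of the $m/2$ top groups symmetrically across the tops so that, by Lemma~\ref{lem:top-or-infinite} and the universal tie-breaking property of $f_{\text{in}}$, the representative chosen is the $\tau$-first top of $T_\ell$, namely $y_\ell$. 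This yields $k$ distinct representatives $y_1, \ldots, y_k$, each of which is a top-ranked alternative only in its own district (since $y_\ell \notin B$ and $y_\ell \neq y_{\ell'}$ for $\ell \neq \ell'$).

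Next I would choose cardinal valuations consistent with the ordinal profiles, reusing the trick from the proof of Lemma~\ref{lem:divided-districts}. Give the $2\lambda/m$ agents in the $y_\ell$-group of $d_\ell$ diffuse values supported on $T_\ell$ (i.e., value $2/m$ for every top in $T_\ell$ and $0$ for every alternative outside $T_\ell$), and give the $2\lambda/m$ agents in each other group of $d_\ell$ value $1/2$ for its top (in $B$) and $1/2$ for $x$, with $0$ elsewhere. Because $y_\ell$ appears as a top only in $d_\ell$ and has value $0$ for every agent outside the $y_\ell$-group (including all agents in any $d_{\ell'}$ with $\ell' \neq \ell$), we obtain $\SW(y_\ell) = O(\lambda/m^2)$ for every $\ell$. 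Meanwhile $\SW(x) \geq \sum_{\ell \in [k]} (m/2 - 1)(2\lambda/m)(1/2) = \Omega(k\lambda)$ from the non-rep groups across all districts. For any probability distribution $p$ that the randomized over-districts rule places on the representatives, the expected social welfare is therefore
\[
\mathbb{E}[\SW(M)] = \sum_{\ell \in [k]} p_\ell \, \SW(y_\ell) = O(\lambda/m^2),
\]
yielding the claimed distortion of $\Omega(km^2)$.

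The hardest step to make fully rigorous is verifying that the symmetric ordinal profile really does force $f_{\text{in}}$ to pick $y_\ell$ as the representative for \emph{any} deterministic ordinal rule that uses the universal tie-breaking ordering. For Plurality this is immediate because all tops are tied in first-place votes and the tie is broken by $\tau$; for richer rules such as Borda, STV, or Copeland, a more careful choice of the per-group rankings is needed to make the tops ordinally indistinguishable so that the rule must fall back on tie-breaking. Reconciling this symmetric ordinal structure with the asymmetric cardinal valuations (which are essential for the simultaneously low $\SW(y_\ell)$ and high $\SW(x)$) is possible because $f_{\text{in}}$ only observes rankings and tied cardinal values admit any consistent ordering of the tied alternatives; a routine case analysis completes the argument.
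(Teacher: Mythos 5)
Your proposal is correct and is essentially the paper's own proof: a single instance with $k$ divided districts, each having a unique top alternative that precedes a shared top set $B$ in the tie-breaking order, so that the forced set of representatives consists of $k$ alternatives each with social welfare $\Theta(\lambda/m^2)$, while the common second choice $x$ has welfare $\Theta(k\lambda)$, defeating any probability distribution over representatives. The step you flag as delicate---that symmetry among the top-ranked alternatives plus the fixed tie-breaking ordering forces $f_{\text{in}}$ to select the first top under that ordering---is invoked in exactly the same (informal) way by the paper, so it is not a gap relative to the paper's own argument.
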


\begin{proof}
Without loss of generality, suppose that the tie-breaking ordering of the alternatives is $a_1 \succ \ldots \succ  a_k \succ b_1 \succ \ldots \succ b_{m/2-1} \succ x \succ c_1 \succ \ldots \succ c_{m/2-k}$; the naming of the alternatives is arbitrary but is assumed to be known and can be exploited. 
For simplicity, for any set of alternatives $X$, denote by $[X]$ an arbitrary ordering of the alternatives in $X$. 

Consider an instance with $k$ symmetric districts such that in district $d_\ell$ there is 
a set of $2\lambda/m$ agents with preference ordering $a_\ell \succ x \succ [A \setminus \{a_\ell,x\}]$,
a set of $2\lambda/m$ agents with preference ordering $b_1 \succ x \succ [A \setminus \{b_1,x\}]$, 
$\ldots$,
and a set of $2\lambda/m$ agents with preference ordering $b_{m/2-1} \succ x \succ [A \setminus \{b_{m/2-1},x\}]$.
By Lemma~\ref{lem:top-or-infinite}, the representative of $d_\ell$ must be one of the top-ranked alternatives (otherwise the distortion of the mechanism would be infinite). Since $a_\ell$ is ranked above the other alternatives in the tie-breaking ordering, she chosen as the representative of $d_\ell$.
Hence, the set of representatives is $\{a_1, \ldots, a_k\}$, and the winner is chosen according to some probability distribution over this set.

The valuation profile may be such that the $2\lambda/m$ agents in district $d_\ell$ that rank $a_\ell$ first have value $1/m$ for all alternatives, while all other agents in $d_\ell$ have value $1/2$ for their two favorite alternatives. Consequently, the social welfare of alternative $a_\ell$ is $2\lambda/m^2$, and thus the social welfare of the mechanism is also this much, no matter the probability distribution over the district representatives. 
In contrast, the social welfare of $x$ is approximately $k\lambda/2$, leading to a distortion of $\Omega(km^2)$.
\end{proof}

When randomization at the district level can be leveraged by ordinal distributed mechanisms, then we achieve distortion much better than what is implied by Corollary~\ref{cor:uniform-of-stable-lottery}, while also achieving strategyproofness. In particular, there are several centralized voting rules that can be implemented as distributed mechanisms, in the sense that they define the \emph{same probability distribution} over the alternatives. One such important class of voting rules is that of {\em point-voting schemes}, which is part of a larger class of strategyproof mechanisms~\citep{Barbera1978,hylland1980strategy, gibbard1977manipulation} and includes rules with almost best possible distortion guarantees~\citep{boutilier2015optimal,ebadian2022optimized}. 

\subsection{Point-voting schemes}
A point-voting scheme chooses an agent uniformly at random and then outputs her $t$-th favorite alternative with probability $p_t$, where 
$p_1\geq \ldots \geq p_m \geq 0$ and $\sum_{t = 1}^m p_t = 1$. Hence, the probability according to which the point-voting scheme using the probability vector $\mathbf{p}=(p_1,\ldots,p_m)$ chooses alternative $a\in A$ as the winner $w$ is  $\Pr[w=a] = \frac{1}{n} \sum_{i \in N} p_{\sigma_i(a)}$, where $\sigma_i(a)$ is the position that $i$ ranks $a$ in her preference ranking $\sigma$.

There are many point-voting schemes of interest. 
For every positional scoring rule using the scoring vector $\mathbf{s} = (s_1,\ldots,s_m)$, we can define a point-voting scheme $f(\mathbf{s})$ by normalizing the scoring vector, that is, define $p_t = s_t / \left( \sum_{j \in [m]} s_j \right)$ for every $t \in [m]$ so that the winning probability of alternative $a$ is 
\begin{align*}
\Pr[w=a] 
&= \frac{1}{n} \sum_{i \in N} \frac{s_{\sigma_i(a)}}{\sum_{j \in [m]} s_j} \\
&= \frac{\sum_{i \in N}s_{\sigma_i(a)}}{n \cdot \sum_{j \in [m]} s_j}.
\end{align*}
Another important point-voting scheme is the rule that chooses each alternative uniformly at random; in this case, we have $p_t = 1/m$ for every $t \in [m]$ so that 
$\Pr[w=a] = \frac{1}{n} \sum_{i \in N} \frac{1}{m} = \frac{1}{m}$. 

For any point-voting scheme $f$ that uses a probability vector $\mathbf{p}$, we consider the distributed mechanism {\sc Proportional-of-$f$-Point-Voting}, which works as follows:
\begin{itemize}
    \item For every district $d$, choose the representative $a_d$ to be alternative $a \in A$ with probability $\frac{1}{\lambda} \sum_{i \in N_d} p_{\sigma_i(a)}$.
    \item Choose the winner to be the representative of district $d$ with probability $n_d/n$.
\end{itemize}

\begin{theorem} \label{thm:distributed-implementation-of-point-voting-scheme}
{\sc Proportional-of-$f$-Point-Voting} defines the same probability distribution as the point-voting scheme $f$. 
\end{theorem}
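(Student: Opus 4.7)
The plan is to compute the probability that {\sc Proportional-of-$f$-Point-Voting} selects an arbitrary alternative $a\in A$ as the winner and verify that it coincides with the probability that $f$ selects $a$. By construction, the mechanism has two layers of randomness, so the natural tool is the law of total probability conditioned on which district $d$ ends up contributing the representative that becomes the winner. The probability that district $d$ is the one whose representative is chosen is $n_d/n$ (in the symmetric case, $\lambda/n = 1/k$), and conditioned on this, $a_d = a$ with probability $\tfrac{1}{n_d}\sum_{i\in N_d} p_{\sigma_i(a)}$. Since this quantity is non-negative and sums to $1$ over $a$ (because $\sum_{a\in A} p_{\sigma_i(a)} = \sum_{t=1}^m p_t = 1$ for every agent $i$), the in-district distribution is well-defined; this small sanity check is worth mentioning up front.

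Next, I would combine the two layers:
\begin{align*}
\Pr[w = a]
&= \sum_{d\in D} \frac{n_d}{n}\cdot \Pr[a_d = a] \\
&= \sum_{d\in D} \frac{n_d}{n}\cdot \frac{1}{n_d}\sum_{i\in N_d} p_{\sigma_i(a)} \\
&= \frac{1}{n}\sum_{d\in D}\sum_{i\in N_d} p_{\sigma_i(a)}.
\end{align*}
Because the districts partition $N$, the double sum collapses to $\sum_{i\in N} p_{\sigma_i(a)}$, which yields $\Pr[w=a] = \tfrac{1}{n}\sum_{i\in N} p_{\sigma_i(a)}$. This is precisely the probability that the point-voting scheme $f$ assigns to $a$, as stated in the definition of point-voting schemes preceding the theorem.

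There is essentially no obstacle here: the key observation is that the factor $n_d$ in the district-selection probability cancels exactly the normalizer $1/n_d$ inside each in-district distribution, which is why the implementation works for asymmetric districts as well as for symmetric ones. The only subtlety worth flagging in the writeup is that the definition of the mechanism in the theorem statement uses $\lambda$ (the symmetric district size), but the identity goes through verbatim with $n_d$ in place of $\lambda$; this is what makes the result apply in the asymmetric setting and is exactly the feature exploited to bypass district-induced distortion blow-up for point-voting schemes.
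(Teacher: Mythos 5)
Your proof is correct and is essentially identical to the paper's: both apply the law of total probability over which district supplies the winner, and cancel the $n_d$ selection weight against the $1/n_d$ in-district normalizer to collapse the double sum to $\frac{1}{n}\sum_{i\in N} p_{\sigma_i(a)}$. Your added remarks (that the in-district distribution is well-defined, and that the mechanism's definition should read $n_d$ rather than $\lambda$ for the asymmetric case) are accurate but not substantive departures.
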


\begin{proof}
The probability that alternative $a$ is chosen as the winner by {\sc Proportional-of-$f$-Point-Voting} is
\begin{align*}
\Pr[w = a] 
&= \sum_{d \in D} \Pr[w = a_d] \cdot \Pr[a_d = a] \\
&= \sum_{d \in D} \frac{n_d}{n} \cdot \frac{1}{n_d} \sum_{i \in N_d} p_{\sigma_i(a)} \\
&= \frac{1}{n} \sum_{i \in N} p_{\sigma_i(a)},
\end{align*}
that is, {\sc Proportional-of-$f$-Point-Voting} chooses $a$ with the same probability as $f$. 
\end{proof}

Theorem~\ref{thm:distributed-implementation-of-point-voting-scheme} shows that {\sc Proportional-of-$f$-Point-Voting} achieves the same distortion bound as the point-voting scheme $f$ it uses as the in-district rule, and also that it inherits its strategyproofness property. This is extremely useful, as there are centralized voting rules that are based on point-voting schemes and achieve almost the best possible distortion. 

\citet{boutilier2015optimal} considered a voting rule that is a convex combination of two point-voting schemes: With probability $1/2$ choose an alternative uniformly at random, and with probability $1/2$ run the point-voting scheme defined by normalizing the harmonic scoring rule $H = (1, 1/2, \ldots, 1/m)$. We will refer to this mechanism as {\sc BCHLPS}. \citet{boutilier2015optimal} showed that this voting rule has distortion $O(\sqrt{m \log{m}})$. An important property of point-voting schemes is that any rule that is a convex combination of point-voting schemes is also a point-voting scheme. The following lemma is similar to lemmas proved before in the literature (e.g., see \cite{filos2014truthful,Barbera1978}); we provide a proof for completeness.

\begin{lemma}\label{lem:combination-of-point-voting-schemes}
Let $f_1,\ldots,f_\kappa$ be point-voting schemes defined by the probability vectors $\mathbf{p}_1, \ldots, \mathbf{p}_\kappa$. 
For any non-negative numbers $q_1,\ldots,q_\kappa$ such that $\sum_{j \in [\kappa]} q_j = 1$, the voting rule $f$ that chooses the outcome of $f_j$ with probability $q_j$ is a point-voting scheme.
\end{lemma}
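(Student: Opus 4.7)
The plan is a direct verification: I construct the probability vector of the candidate point-voting scheme explicitly by taking the convex combination of the given vectors, check that it satisfies the defining properties of a point-voting scheme (non-negativity, monotonicity, normalization), and then show that the induced probability on every alternative matches the one produced by $f$.

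Concretely, define
\[
p^*_t = \sum_{j \in [\kappa]} q_j \, p_{j,t} \quad \text{for every } t \in [m],
\]
and let $\mathbf{p}^* = (p^*_1, \ldots, p^*_m)$. First I would verify that $\mathbf{p}^*$ is a valid probability vector of a point-voting scheme. Non-negativity is immediate since $q_j \geq 0$ and $p_{j,t} \geq 0$. Monotonicity $p^*_1 \geq \ldots \geq p^*_m$ follows because each $\mathbf{p}_j$ is monotone non-increasing and a non-negative combination of non-increasing sequences is non-increasing. Normalization holds by swapping sums:
\[
\sum_{t \in [m]} p^*_t = \sum_{t \in [m]} \sum_{j \in [\kappa]} q_j \, p_{j,t} = \sum_{j \in [\kappa]} q_j \sum_{t \in [m]} p_{j,t} = \sum_{j \in [\kappa]} q_j = 1.
\]

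Next I would compute the probability that $f$ selects an arbitrary alternative $a \in A$. By the definition of $f$ as the mixture that picks $f_j$ with probability $q_j$, and using the definition of each point-voting scheme $f_j$, we have
\[
\Pr_f[w = a] = \sum_{j \in [\kappa]} q_j \cdot \Pr_{f_j}[w = a] = \sum_{j \in [\kappa]} q_j \cdot \frac{1}{n} \sum_{i \in N} p_{j,\sigma_i(a)} = \frac{1}{n} \sum_{i \in N} \sum_{j \in [\kappa]} q_j \, p_{j,\sigma_i(a)} = \frac{1}{n} \sum_{i \in N} p^*_{\sigma_i(a)},
\]
which is precisely the winning probability of $a$ under the point-voting scheme defined by $\mathbf{p}^*$. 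Combined with the first step, this shows $f$ is a point-voting scheme.

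There is no real obstacle here — the result is essentially a bookkeeping exercise about swapping the order of the two summations and checking three elementary closure properties of convex combinations. The only subtle point worth being explicit about is the monotonicity condition $p^*_1 \geq \ldots \geq p^*_m$, which would fail if we allowed arbitrary (e.g., negative) coefficients $q_j$; convexity of the combination is exactly what is needed to preserve it.
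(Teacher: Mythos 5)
Your proof is correct and follows essentially the same route as the paper's: define $p^*_t = \sum_j q_j p_{j,t}$, swap the order of summation in the winning probability, and identify the result as the point-voting scheme with vector $\mathbf{p}^*$. You are in fact slightly more thorough than the paper, which omits the explicit verification that $\mathbf{p}^*$ satisfies non-negativity, monotonicity, and normalization.
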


\begin{proof}
Let $\sigma$ be an arbitrary preference profile. 
For any $j \in [\kappa]$, denote the $t$-th coordinate of $\mathbf{p}_j$ as $p_{j,t}$, and let $P_j(a) = \Pr[a = f_j(\sigma)]$ be the probability of choosing $a$ as the winner according to point-voting scheme $f_j$. Then, the voting rule $f$ chooses alternative $a$ as the winner $w$ with probability 
\begin{align*}
\Pr[w=a] 
&= \sum_{j \in [\kappa]} q_j \cdot P_j(a) \\
&= \sum_{j \in [\kappa]} q_j \cdot \left( \frac{1}{n} \sum_{i \in N} p_{j,\sigma_i(a)} \right) \\
&= \frac{1}{n} \sum_{i \in N} \sum_{j \in [\kappa]} q_j \cdot p_{j,\sigma_i(a)}.
\end{align*}
Hence, $f$ is a point-voting scheme defined by the probability vector $\mathbf{p}$ with $p_t = \sum_{j \in [\kappa]} q_j \cdot p_{j,t}$.
\end{proof}

Consequently, by Theorem~\ref{thm:distributed-implementation-of-point-voting-scheme} and Lemma~\ref{lem:combination-of-point-voting-schemes}, we can construct a randomized ordinal distributed mechanism based on the point-voting scheme of \citet{boutilier2015optimal} that achieves the same distortion bound and is strategyproof.

\begin{corollary} \label{cor:BoutilierBest}
There exists a randomized ordinal strategyproof distributed mechanism with distortion $O(\sqrt{m \log{m}})$.
\end{corollary}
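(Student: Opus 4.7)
The plan is to directly assemble the mechanism from the three ingredients already in hand: the $O(\sqrt{m\log m})$ centralized rule of \citet{boutilier2015optimal}, the closure of point-voting schemes under convex combinations (Lemma~\ref{lem:combination-of-point-voting-schemes}), and the distributed implementation of a point-voting scheme given by {\sc Proportional-of-$f$-Point-Voting} (Theorem~\ref{thm:distributed-implementation-of-point-voting-scheme}).

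First, I would observe that the {\sc BCHLPS} rule is, by its very definition, a convex combination (with weights $1/2, 1/2$) of two point-voting schemes: the uniform rule, corresponding to $\mathbf{p} = (1/m,\ldots,1/m)$, and the harmonic rule, corresponding to the normalization of $H = (1, 1/2, \ldots, 1/m)$. Applying Lemma~\ref{lem:combination-of-point-voting-schemes} with $\kappa=2$ immediately yields that {\sc BCHLPS} is itself a point-voting scheme, say with probability vector $\mathbf{p}^\star$.

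Next, I would instantiate the distributed mechanism $M := {\sc Proportional\text{-}of\text{-}}\mathbf{p}^\star{\sc \text{-}Point\text{-}Voting}$. By Theorem~\ref{thm:distributed-implementation-of-point-voting-scheme}, this mechanism induces exactly the same probability distribution over the alternatives as the centralized {\sc BCHLPS} rule on any instance. Since the distortion depends only on this distribution and on the valuations (and the distortion is defined with respect to the worst-case instance, which is identical in both settings), we inherit verbatim the distortion bound $O(\sqrt{m\log m})$ proven by \citet{boutilier2015optimal}.

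Finally, for strategyproofness I would invoke the standard fact (already cited in the paper via \cite{gibbard1977manipulation,Barbera1978,hylland1980strategy}) that every point-voting scheme is strategyproof: since the induced distribution of $M$ coincides with that of a point-voting scheme on every profile, no agent can benefit from misreporting. The mechanism $M$ is ordinal by construction (it uses only the ranks $\sigma_i(a)$), so all three required properties — ordinality, strategyproofness, and distortion $O(\sqrt{m\log m})$ — are obtained simultaneously. There is no real obstacle here beyond checking that the composition preserves strategyproofness in the distributed setting, and this follows from the distribution-equivalence in Theorem~\ref{thm:distributed-implementation-of-point-voting-scheme}: an agent's report affects $M$ only through the point-voting scheme it implements, which is itself immune to manipulation.
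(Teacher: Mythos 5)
Your argument is correct and follows exactly the paper's route: express {\sc BCHLPS} as a convex combination of the uniform and normalized-harmonic point-voting schemes, invoke Lemma~\ref{lem:combination-of-point-voting-schemes} to conclude it is itself a point-voting scheme, and then apply Theorem~\ref{thm:distributed-implementation-of-point-voting-scheme} to obtain a distributed mechanism inducing the identical distribution, which inherits both the $O(\sqrt{m\log m})$ distortion and strategyproofness. Nothing to add.
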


This distortion bound is almost best possible as the lower bound of $\Omega(\sqrt{m})$ for randomized centralized rules holds trivially for distributed mechanisms by considering single-district instances.


\afterpage{%
    \clearpage
    \thispagestyle{empty}
    \begin{landscape}

\begin{table}[!th]
\centering
\small
\begin{tabular}{|c|c|c|c|c|c|c|c|c|c|c|}
\hline
   $k$  & {\sc RV} & {\sc PL} & {\sc Veto}  & {\sc Borda} & {\sc Harmonic} & {\sc PropPL} & {\sc PropVeto} & {\sc PropBorda} & {\sc PropHarmonic} & {\sc BCHLPS} \\ \hline
$1$  & 1            & 1.049     & 1.035 & 1.007 & 1.017    & 1.135         & 1.166    & 1.155     & 1.156        & 1.166  \\ \hline
$2$  & 1.017        & 1.070     & 1.059 & 1.018 & 1.020    & 1.137         & 1.166    & 1.155     & 1.156        & 1.165  \\ \hline
$5$  & 1.018        & 1.064     & 1.070 & 1.020 & 1.036    & 1.133         & 1.162    & 1.155     & 1.156        & 1.165  \\ \hline
$10$ & 1.019        & 1.066     & 1.082 & 1.021 & 1.044    & 1.133         & 1.162    & 1.153     & 1.154        & 1.163  \\ \hline
$20$ & 1.024        & 1.066     & 1.107 & 1.030 & 1.050    & 1.134         & 1.165    & 1.154     & 1.155        & 1.164  \\ \hline
$25$ & 1.022        & 1.067     & 1.142 & 1.031 & 1.107    & 1.133         & 1.165    & 1.153     & 1.154        & 1.164  \\ \hline
\end{tabular}
\caption{Distortion bounds of various voting rules based on valuations defined by the provided scores of the Jester dataset and random district partitions.}
\label{tab:RandomDistrictsJester}
\end{table}

\begin{table}[!ht]
\small
\centering
\begin{tabular}{|l|c|c|c|c|c!{\vrule width 0.5mm}c|c|c|c|c|c|}
\hline
& {\sc RV} & {\sc PL} & {\sc Veto}  & {\sc Borda} & {\sc Harmonic} & {\sc PropPL} & {\sc PropVeto} & {\sc PropBorda} & {\sc PropHarmonic} & {\sc BCHLPS} \\
\hline
\rowcolor{gray!30!white}
$k=1$  &              &           &       &       &          &               &          &           &              &          \\
Uniform     & 1            & 1.038     & 1.045 & 1.006 & 1.019    & 1.079         & 1.087    & 1.085     & 1.085        & 1.087  \\
Beta        & 1            & 1.086     & 1.105 & 1.029 & 1.050    & 1.140         & 1.152    & 1.147     & 1.147        & 1.150  \\
Exponential & 1            & 1.032     & 1.096 & 1.018 & 1.013    & 1.118         & 1.137    & 1.132     & 1.131        & 1.134  \\
\hline
\rowcolor{gray!30!white}
$k=2$         &              &           &       &       &          &               &          &           &              &        \\
Uniform     & 1.026        & 1.052     & 1.056 & 1.030 & 1.039    & 1.079         & 1.087    & 1.084     & 1.084        & 1.086  \\
Beta        & 1.044        & 1.111     & 1.118 & 1.064 & 1.080    & 1.140         & 1.152    & 1.147     & 1.147        & 1.150  \\
Exponential & 1.039        & 1.062     & 1.115 & 1.055 & 1.051    & 1.118         & 1.136    & 1.132     & 1.130        & 1.135  \\
\hline
\rowcolor{gray!30!white}
$k=5$         &              &           &       &       &          &               &          &           &              &        \\
Uniform     & 1.031        & 1.050     & 1.057 & 1.029 & 1.038    & 1.076         & 1.084    & 1.081     & 1.081        & 1.084  \\
Beta        & 1.052        & 1.113     & 1.125 & 1.074 & 1.094    & 1.143         & 1.155    & 1.151     & 1.150        & 1.154  \\
Exponential & 1.039        & 1.069     & 1.110 & 1.055 & 1.056    & 1.119         & 1.137    & 1.133     & 1.131        & 1.134  \\
\hline
\rowcolor{gray!30!white}
$k=20$        &              &           &       &       &          &               &          &           &              &        \\
Uniform     & 1.031        & 1.055     & 1.077 & 1.039 & 1.042    & 1.077         & 1.085    & 1.082     & 1.082        & 1.084  \\
Beta        & 1.055        & 1.105     & 1.145 & 1.073 & 1.084    & 1.141         & 1.154    & 1.149     & 1.149        & 1.152  \\
Exponential & 1.047        & 1.069     & 1.123 & 1.060 & 1.058    & 1.115         & 1.133    & 1.128     & 1.127        & 1.129  \\
\hline
\rowcolor{gray!30!white}
$k=25$        &              &           &       &       &          &               &          &           &              &        \\
Uniform     & 1.031        & 1.056     & 1.071 & 1.036 & 1.044    & 1.077         & 1.085    & 1.082     & 1.0824       & 1.084  \\
Beta        & 1.054        & 1.124     & 1.149 & 1.084 & 1.094    & 1.148         & 1.155    & 1.150     & 1.150        & 1.151  \\
Exponential & 1.042        & 1.069     & 1.129 & 1.060 & 1.054    & 1.116         & 1.134    & 1.129     & 1.128        & 1.131  \\
\hline 
\end{tabular}
\caption{Distortion bounds of various voting rules based on valuations defined according to several probability distributions and random district partitions. Results for deterministic mechanisms are presented at the left of the bold vertical line, and results for randomized mechanisms are at the right of the bold vertical line. }
\label{tab:AllRandom}
\end{table}

    \end{landscape}
    \clearpage
}

\section{Experiments} \label{sec:experiments}
In this section, we perform experiments with real and synthetic datasets, aiming to identify patterns in the distortion of several well-known voting rules and examine whether these support our theoretical findings. It is well-documented in the literature (e.g., see \citep{boutilier2015optimal, distributed2020}) that when working with real or realistic preferences, it often is the case that the distortions bounds are small numbers quite close to $1$. In this sense, our goal is not primarily to demonstrate the distortion bounds themselves, but rather the dependence of these bounds on the distributed decision-making process, in particular the number of districts, as well as the use of randomization. We perform two main experiments, one with real-world preferences and valuation data, and one with synthetic data. All our experiments are with symmetric districts.

\subsection{Experiments with the Jester Dataset} 

For our first experiment, we use the Jester Joke Dataset \citep{goldberg2001eigentaste}. The dataset contains ratings for $100$ different jokes in the range $[-10,10]$, provided by $70000$ users. We chose to work with this dataset as it has also been employed by \citet{boutilier2015optimal} in the context of centralized distortion bounds, and also by \citet{distributed2020} for the distortion of deterministic distributed mechanisms that use plurality as the over-district rule.

Following the methodology developed in these works, we construct inputs consisting of ratings for the $8$ most-rated jokes. In particular, we perform $1000$ random runs in which we sample $100$ users from the set of all users that have provided rankings for all eight jokes, and then partition them into $k$ equal-sized districts uniformly at random, for $k \in \{1, 2, 5, 10, 20, 25\}$. Clearly, the case of $k=1$ corresponds to the centralized setting and will be used as a reference point. We interpret the ratings of the jokes as cardinal valuations: to be consistent with our setting (and with the experiments of \citep{boutilier2015optimal,distributed2020}), we add $10$ to each user's rating vector, to ensure that the values are positive and then apply the unit-sum normalization. 
For these inputs, we compute the average distortion of a set of $20$ voting rules over the $1000$ runs of the experiment. In particular, we consider distributed mechanisms $f_\text{over}$-of-$f_\text{in}$, where for $f_\text{over}$ we use \textsc{Plurality} or \textsc{Uniform}, whereas for $f_\text{in}$ we have: \medskip

\noindent \textbf{Deterministic Rules:} We use simple voting scoring rules, namely \textsc{Plurality (PL)}, \textsc{Veto}, \textsc{Borda} and \textsc{Harmonic}, 
as well as \textsc{Range-Voting (RV)}, which in the case of $k=1$ finds the optimal alternative. \smallskip

 \noindent \textbf{Randomized Rules:} Here we use several natural point-voting schemes with probability vectors that are proportional to the aforementioned scoring rules (recall the definition from Section~\ref{sec:randomized}), namely
 \begin{itemize}
     \item \textsc{Proportional to Plurality Score (PropPL)};
     \item \textsc{Proportional to Borda Score (PropBorda)};
     \item \textsc{Proportional to Veto Score (PropVeto)};
     \item \textsc{Proportional to Harmonic Score (PropHarmonic)}.
 \end{itemize}
 We also use the rule of \citet{boutilier2015optimal} (we refer to it as BCHLPS in the following); recall that this is a point-voting scheme that with probability $1/2$ selects an alternative at random and with probability $1/2$ runs the PropHarmonic rule defined above. As established in Corollary~\ref{cor:BoutilierBest} (and the discussion before the statement of the corollary), this is best possible in terms of the worst-case distortion. 
 
The results of our experiments can be seen in Table~\ref{tab:RandomDistrictsJester}. In the table we only present the results where as $f_{\text{over}}$, we used \textsc{Plurality} for deterministic rules and \textsc{Uniform} for randomized rules. This is in accordance to our approach in the theoretical results in previous sections. The bounds for the cases not shown are quite similar, and slightly larger in general. For each of the randomized rules, we perform $300$ runs and calculate their expected social welfare, which we then use to calculate the distortion.

From the results of Table~\ref{tab:RandomDistrictsJester} we observe that, as expected, the existence of multiple districts has an adverse effect on the distortion of deterministic mechanisms, which becomes worse compared to the centralized case $k=1$. For these rules, we can also observe that the distortion generally increases as $k$ increases. In contrast, the distortion of randomized rules remains virtually unchanged for any value of $k$. This is in complete accordance with our theoretical findings, where we established that these rules induce the same probability distribution. The experiments showcase that this does not only hold in expectation, but also in practice (given sufficiently many runs). 

Another crucial observation is that, in terms of the absolute distortion numbers, randomization does not seem to help; if anything, it makes the distortion bounds worse! This can be justified by the fact that real-world instances like those from the Jester dataset display a large degree of homogeneity, which results in the simple deterministic rules performing quite well. On the other hand, randomization often leads to suboptimal choices even on such ``well-behaved'' instances, demeaning the distortion bounds on average. Surprisingly, among ordinal voting rules, \textsc{Borda} seems to perform best across the board even though the theoretical distortion of \textsc{Borda} is in fact unbounded.  

\subsection{Experiments with Synthetic Datasets}

We also perform experiments with datasets that are generated from probability distributions. In particular, and to be consistent with the Jester experiment presented above, we create instances with $100$ agents and $8$ alternatives, by first drawing the values of the agents from a certain distribution, and then constructing the induced ordinal preference profile from those values. We use the following distributions:
\begin{itemize}
    \item {\em Uniform distribution} in $[1,100]$. This is the simplest case, where all possible values are equally likely.
    \item {\em Beta distribution} with $\alpha = 1/10$ and $\beta = 1/10$. This distribution has a symmetric convex pdf function centered around a mean of $1/2$, assigning higher probabilities to values very close to $1$ or $0$. 
    \item {\em Exponential distribution} with exponent $4$, i.e., the pdf is $f(x)=4e^{4}$ for $x \geq 0$ and $f(x)=0$ otherwise. This distribution generates values close to $0$ with high probability, and as the values increase, the probability of them being generated decreases exponentially. 
\end{itemize}
For the rest of the experiment, we perform similar steps as in the case of the Jester dataset: 
We normalize the values to sum up to $1$, and run the set of mechanisms described above. For each randomized mechanism we now perform $150$ individual runs and calculate its expected welfare. We calculate the average distortions over $500$ runs of the experiment for $k$ symmetric districts, where $k \in \{1, 2, 5, 20, 25\}$. Note that the number of runs and the number of district sizes is slightly smaller in this experiment, because it is more computationally intensive (as we need to calculate bounds for $3$ different distributions). Again, we use \textsc{Plurality} as $f_{\text{over}}$ for deterministic and \textsc{Uniform} for randomized mechanisms; the results for the other cases were similar and are not reported. 

The results can be found in Table~\ref{tab:AllRandom}. Similarly to the Jester experiment, it is evident that the distortion of the deterministic mechanisms becomes worse for $k \geq 2$, whereas it remains pretty much the same for randomized mechanisms. Again, we observe that randomization results in worse distortion bounds overall, and that \textsc{Borda} performs best among deterministic mechanisms. Interestingly, contrary to the Jester dataset, here we do not see a clear pattern of the distortion increasing as $k$ increases for deterministic mechanisms (other than the jump from $k=1$ to $k=2$). This is probably due to the fact that the synthetic instances are highly homogeneous, and with uniform random district partitions, the districts end up being quite uniform, regardless of their number and size. 

\smallskip
\noindent
{\em The role of unit-sum.} 
We remark here that normalizing the values to sum up to $1$ effectively makes the Uniform and Exponential 
distributions pretty similar, and this is reflected in the results. To get a sense of the effect of normalization, we also ran the experiments without it. We observe that the distortions for the exponential 
distribution are now larger than those of the uniform distribution. In general, the distortion bounds still lie in the range $[1.03,1.15]$ for all distributions, but their average values (over all documented distortion bounds) are larger for all distributions except Uniform. It is also the case that for the Beta distribution, the bounds of deterministic mechanisms are much closer to those of randomized ones. The distortion of randomized mechanisms is still almost the same for any number of districts.  

\section{Open Problems}

From our results, an interesting technical challenge is to remove the requirement for a consistent tie-breaking ordering from the statement of Theorem~\ref{thm:ordinal-with-det-indistrict}. Similarly, we could attempt to remove unanimity from the lower bound of Theorem~\ref{thm:dist-unanimous}; although unanimity is usually pretty natural, removing it would make the theorem stronger. 
More interestingly, our result about point-voting schemes in Theorem~\ref{thm:distributed-implementation-of-point-voting-scheme} crucially does \emph{not} depend on the normalization of the valuations, and hence also could be applied verbatim to the metric distributed social choice setting studied by \citet{anshelevich2022distortion}, where randomized mechanisms have never been considered; this seems like a natural starting point for such an investigation.

\bibliographystyle{plainnat} 
\bibliography{references}

\begin{thebibliography}{35}
\providecommand{\natexlab}[1]{#1}
\providecommand{\url}[1]{\texttt{#1}}
\expandafter\ifx\csname urlstyle\endcsname\relax
  \providecommand{\doi}[1]{doi: #1}\else
  \providecommand{\doi}{doi: \begingroup \urlstyle{rm}\Url}\fi

\bibitem[Abramowitz et~al.(2019)Abramowitz, Anshelevich, and
  Zhu]{abramowitz2019awareness}
Ben Abramowitz, Elliot Anshelevich, and Wennan Zhu.
\newblock Awareness of voter passion greatly improves the distortion of metric
  social choice.
\newblock In \emph{Proceedings of the The 15th Conference on Web and Internet
  Economics (WINE)}, pages 3--16, 2019.

\bibitem[Amanatidis et~al.(2021)Amanatidis, Birmpas, Filos{-}Ratsikas, and
  Voudouris]{Amanatidis2020peeking}
Georgios Amanatidis, Georgios Birmpas, Aris Filos{-}Ratsikas, and Alexandros~A.
  Voudouris.
\newblock Peeking behind the ordinal curtain: Improving distortion via cardinal
  queries.
\newblock \emph{Artificial Intelligence}, 296:\penalty0 103488, 2021.

\bibitem[Amanatidis et~al.(2022{\natexlab{a}})Amanatidis, Birmpas,
  Filos{-}Ratsikas, and Voudouris]{amanatidis2021matching}
Georgios Amanatidis, Georgios Birmpas, Aris Filos{-}Ratsikas, and Alexandros~A.
  Voudouris.
\newblock A few queries go a long way: Information-distortion tradeoffs in
  matching.
\newblock \emph{Journal of Artificial Intelligence Research}, 74,
  2022{\natexlab{a}}.

\bibitem[Amanatidis et~al.(2022{\natexlab{b}})Amanatidis, Birmpas,
  Filos-Ratsikas, and Voudouris]{amanatidis2022don}
Georgios Amanatidis, Georgios Birmpas, Aris Filos-Ratsikas, and Alexandros~A.
  Voudouris.
\newblock Don't roll the dice, ask twice: The two-query distortion of matching
  problems and beyond.
\newblock In \emph{Proceedings of the 36th Conference on Neural Information
  Processing Systems ({NeurIPS})}, 2022{\natexlab{b}}.

\bibitem[Anshelevich and Sekar(2016)]{anshelevich2016blind}
Elliot Anshelevich and Shreyas Sekar.
\newblock Blind, greedy, and random: Algorithms for matching and clustering
  using only ordinal information.
\newblock In \emph{Proceedings of the 30th {AAAI} Conference on Artificial
  Intelligence ({AAAI})}, pages 390--396, 2016.

\bibitem[Anshelevich et~al.(2018)Anshelevich, Bhardwaj, Elkind, Postl, and
  Skowron]{anshelevich2018approximating}
Elliot Anshelevich, Onkar Bhardwaj, Edith Elkind, John Postl, and Piotr
  Skowron.
\newblock Approximating optimal social choice under metric preferences.
\newblock \emph{Artificial Intelligence}, 264:\penalty0 27--51, 2018.

\bibitem[Anshelevich et~al.(2021)Anshelevich, Filos-Ratsikas, Shah, and
  Voudouris]{survey2021}
Elliot Anshelevich, Aris Filos-Ratsikas, Nisarg Shah, and Alexandros~A.
  Voudouris.
\newblock Distortion in social choice problems: The first 15 years and beyond.
\newblock In \emph{Proceedings of the 30th International Joint Conference on
  Artificial Intelligence {(IJCAI)}}, pages 4294--4301, 2021.

\bibitem[Anshelevich et~al.(2022)Anshelevich, Filos-Ratsikas, and
  Voudouris]{anshelevich2022distortion}
Elliot Anshelevich, Aris Filos-Ratsikas, and Alexandros~A Voudouris.
\newblock The distortion of distributed metric social choice.
\newblock \emph{Artificial Intelligence}, 308:\penalty0 103713, 2022.

\bibitem[Bachrach et~al.(2016)Bachrach, Lev, Lewenberg, and
  Zick]{bachrach2016misrepresentation}
Yoram Bachrach, Omer Lev, Yoad Lewenberg, and Yair Zick.
\newblock Misrepresentation in district voting.
\newblock In \emph{IJCAI}, pages 81--87, 2016.

\bibitem[Barbera(1978)]{Barbera1978}
Salvador Barbera.
\newblock \emph{Nice Decision Schemes}.
\newblock Springer Netherlands, 1978.

\bibitem[Benad{\`{e}} et~al.(2017)Benad{\`{e}}, Nath, Procaccia, and
  Shah]{benade2017participatory}
Gerdus Benad{\`{e}}, Swaprava Nath, Ariel~D. Procaccia, and Nisarg Shah.
\newblock Preference elicitation for participatory budgeting.
\newblock In \emph{Proceedings of the 31st {AAAI} Conference on Artificial
  Intelligence ({AAAI})}, pages 376--382, 2017.

\bibitem[Bhaskar and Ghosh(2018)]{bhaskar2018cardinal}
Umang Bhaskar and Abheek Ghosh.
\newblock On the welfare of cardinal voting mechanisms.
\newblock In \emph{Proceedings of the 38th {IARCS} Annual Conference on
  Foundations of Software Technology and Theoretical Computer Science
  ({FSTTCS})}, pages 27:1--27:22, 2018.

\bibitem[Bhaskar et~al.(2018)Bhaskar, Dani, and
  Ghosh]{bhaskar2018truthful-multiwinner}
Umang Bhaskar, Varsha Dani, and Abheek Ghosh.
\newblock Truthful and near-optimal mechanisms for welfare maximization in
  multi-winner elections.
\newblock In \emph{Proceedings of the 32nd {AAAI} Conference on Artificial
  Intelligence ({AAAI})}, pages 925--932, 2018.

\bibitem[Borodin et~al.(2018)Borodin, Lev, Shah, and Strangway]{borodin2018big}
Allan Borodin, Omer Lev, Nisarg Shah, and Tyrone Strangway.
\newblock Big city vs. the great outdoors: Voter distribution and how it
  affects gerrymandering.
\newblock In \emph{IJCAI}, pages 98--104, 2018.

\bibitem[Borodin et~al.(2019)Borodin, Lev, Shah, and
  Strangway]{borodin2019primaries}
Allan Borodin, Omer Lev, Nisarg Shah, and Tyrone Strangway.
\newblock Primarily about primaries.
\newblock In \emph{Proceedings of the 33rd {AAAI} Conference on Artificial
  Intelligence ({AAAI})}, pages 1804--1811, 2019.

\bibitem[Boutilier et~al.(2015)Boutilier, Caragiannis, Haber, Lu, Procaccia,
  and Sheffet]{boutilier2015optimal}
Craig Boutilier, Ioannis Caragiannis, Simi Haber, Tyler Lu, Ariel~D. Procaccia,
  and Or~Sheffet.
\newblock Optimal social choice functions: A utilitarian view.
\newblock \emph{Artificial Intelligence}, 227:\penalty0 190--213, 2015.

\bibitem[Brandt et~al.(2016)Brandt, Conitzer, Endriss, Lang, and
  Procaccia]{comsoc-book}
Felix Brandt, Vincent Conitzer, Ulle Endriss, J{\'{e}}r{\^{o}}me Lang, and
  Ariel~D. Procaccia, editors.
\newblock \emph{Handbook of Computational Social Choice}.
\newblock Cambridge University Press, 2016.

\bibitem[Caragiannis et~al.(2017)Caragiannis, Nath, Procaccia, and
  Shah]{caragiannis2017subset}
Ioannis Caragiannis, Swaprava Nath, Ariel~D. Procaccia, and Nisarg Shah.
\newblock Subset selection via implicit utilitarian voting.
\newblock \emph{Journal of Artificial Intelligence Research}, 58:\penalty0
  123--152, 2017.

\bibitem[Caragiannis et~al.(2022)Caragiannis, Shah, and Voudouris]{CSV22}
Ioannis Caragiannis, Nisarg Shah, and Alexandros~A. Voudouris.
\newblock The metric distortion of multiwinner voting.
\newblock In \emph{Proceedings of the 36th {AAAI} Conference on Artificial
  Intelligence ({AAAI})}, pages 4900--4907, 2022.

\bibitem[Ebadian et~al.(2022)Ebadian, Kahng, Peters, and
  Shah]{ebadian2022optimized}
Soroush Ebadian, Anson Kahng, Dominik Peters, and Nisarg Shah.
\newblock Optimized distortion and proportional fairness in voting.
\newblock In \emph{Proceedings of the 23rd {ACM} Conference on Economics and
  Computation ({EC})}, pages 563--600, 2022.

\bibitem[Elkind et~al.(2021)Elkind, Gan, Obraztsova, Rabinovich, and
  Voudouris]{elkind2021protecting}
Edith Elkind, Jiarui Gan, Svetlana Obraztsova, Zinovi Rabinovich, and
  Alexandros~A. Voudouris.
\newblock Protecting elections by recounting ballots.
\newblock \emph{Artificial Intelligence}, 290:\penalty0 103401, 2021.

\bibitem[Filos-Ratsikas and Miltersen(2014)]{filos2014truthful}
Aris Filos-Ratsikas and Peter~Bro Miltersen.
\newblock Truthful approximations to range voting.
\newblock In \emph{Proceedings of the 10th International Conference on Web and
  Internet Economics ({WINE})}, pages 175--188, 2014.

\bibitem[Filos-Ratsikas and Voudouris(2021)]{filos2021approximate}
Aris Filos-Ratsikas and Alexandros~A Voudouris.
\newblock Approximate mechanism design for distributed facility location.
\newblock In \emph{International Symposium on Algorithmic Game Theory}, pages
  49--63. Springer, 2021.

\bibitem[Filos-Ratsikas et~al.(2014)Filos-Ratsikas, Frederiksen, and
  Zhang]{Aris14}
Aris Filos-Ratsikas, S{\o}ren Kristoffer~Stiil Frederiksen, and Jie Zhang.
\newblock {Social welfare in one-sided matchings: Random priority and beyond}.
\newblock In \emph{Proceedings of the 7th Symposium of Algorithmic Game Theory
  ({SAGT})}, pages 1--12, 2014.

\bibitem[Filos{-}Ratsikas et~al.(2020)Filos{-}Ratsikas, Micha, and
  Voudouris]{distributed2020}
Aris Filos{-}Ratsikas, Evi Micha, and Alexandros~A. Voudouris.
\newblock The distortion of distributed voting.
\newblock \emph{Artificial Intelligence}, 286:\penalty0 103343, 2020.

\bibitem[Gibbard(1977)]{gibbard1977manipulation}
Allan Gibbard.
\newblock Manipulation of schemes that mix voting with chance.
\newblock \emph{Econometrica: Journal of the Econometric Society}, pages
  665--681, 1977.

\bibitem[Gkatzelis et~al.(2020)Gkatzelis, Halpern, and
  Shah]{gkatzelis2020resolving}
Vasilis Gkatzelis, Daniel Halpern, and Nisarg Shah.
\newblock Resolving the optimal metric distortion conjecture.
\newblock In \emph{Proceedings of the 61st {IEEE} Annual Symposium on
  Foundations of Computer Science ({FOCS})}, pages 1427--1438, 2020.

\bibitem[Goldberg et~al.(2001)Goldberg, Roeder, Gupta, and
  Perkins]{goldberg2001eigentaste}
Ken Goldberg, Theresa Roeder, Dhruv Gupta, and Chris Perkins.
\newblock Eigentaste: A constant time collaborative filtering algorithm.
\newblock \emph{information retrieval}, 4\penalty0 (2):\penalty0 133--151,
  2001.

\bibitem[Hylland(1980)]{hylland1980strategy}
Aanund Hylland.
\newblock Strategy proofness of voting procedures with lotteries as outcomes
  and infinite sets of strategies.
\newblock Technical report, 1980.

\bibitem[Kizilkaya and Kempe(2022)]{kempe22veto}
Fatih~Erdem Kizilkaya and David Kempe.
\newblock Plurality veto: {A} simple voting rule achieving optimal metric
  distortion.
\newblock In \emph{Proceedings of the 31st International Joint Conference on
  Artificial Intelligence ({IJCAI})}, pages 349--355, 2022.

\bibitem[Lev and Lewenberg(2019)]{lev2019reverse}
Omer Lev and Yoad Lewenberg.
\newblock “reverse gerrymandering”: Manipulation in multi-group decision
  making.
\newblock In \emph{Proceedings of the AAAI Conference on Artificial
  Intelligence}, volume~33, pages 2069--2076, 2019.

\bibitem[Lewenberg et~al.(2017)Lewenberg, Lev, and
  Rosenschein]{lewenberg2017divide}
Yoad Lewenberg, Omer Lev, and Jeffrey~S Rosenschein.
\newblock Divide and conquer: Using geographic manipulation to win
  district-based elections.
\newblock In \emph{Proceedings of the 16th Conference on Autonomous Agents and
  MultiAgent Systems}, pages 624--632, 2017.

\bibitem[Mandal et~al.(2019)Mandal, Procaccia, Shah, and
  Woodruff]{mandal2019thrifty}
Debmalya Mandal, Ariel~D. Procaccia, Nisarg Shah, and David~P. Woodruff.
\newblock Efficient and thrifty voting by any means necessary.
\newblock In \emph{Proceedings of the 32nd Annual Conference on Neural
  Information Processing Systems ({NeurIPS})}, pages 7178--7189, 2019.

\bibitem[Mandal et~al.(2020)Mandal, Shah, and Woodruff]{mandal2020optimal}
Debmalya Mandal, Nisarg Shah, and David~P. Woodruff.
\newblock Optimal communication-distortion tradeoff in voting.
\newblock In \emph{Proceedings of the 21st {ACM} Conference on Economics and
  Computation ({EC})}, pages 795--813, 2020.

\bibitem[Procaccia and Rosenschein(2006)]{procaccia2006distortion}
Ariel~D. Procaccia and Jeffrey~S. Rosenschein.
\newblock The distortion of cardinal preferences in voting.
\newblock In \emph{International Workshop on Cooperative Information Agents
  ({CIA})}, pages 317--331, 2006.

\end{thebibliography}

\end{document}